\DeclareMathOperator*{\argmin}{arg\,min}  
\DeclareMathOperator*{\argmax}{arg\,max}
\newcommand{\E}{\mathrm{E}}  
\egroup \newcommand{\eqdef}{\stackrel{\textrm{\tiny def}}{=}}
\newcommand{\myctot}{C}
\newtheorem{proposition}{Proposition}
\begin{document}

\title{On the estimation of spatial density \\ from mobile network operator data}

\author{\IEEEauthorblockN{Fabio Ricciato and Angelo Coluccia} 
\thanks{F. Ricciato is with European Commission, DG Eurostat, Unit A5 Methodology; Innovation in Official Statistics, Luxembourg. Email: fabio.ricciato@ec.europa.eu.}
\thanks{A. Coluccia is with University of Salento, Lecce, Italy. Email: angelo.coluccia@unisalento.it.}
\thanks{The views expressed in this paper are those of the authors and do not necessarily represent the official position of the European Commission.}
}

\maketitle

\begin{abstract}
We tackle the problem of estimating the spatial distribution of mobile phones from Mobile Network Operator (MNO) data, namely Call Detail Record (CDR) or signalling data. 
The process of transforming MNO data to a density map 
requires geolocating radio cells to determine
their spatial footprint. Traditional geolocation solutions rely on Voronoi tessellations and approximate cell footprints by mutually disjoint regions.
Recently, some pioneering work started to consider more elaborate geolocation methods with partially overlapping (non-disjoint) cell footprints coupled with a probabilistic model for phone-to-cell association. Estimating the spatial density in such a probabilistic setup is currently an open research problem and is the focus of the present work. We start by reviewing three different estimation methods proposed in literature and provide novel analytical insights that unveil some key aspects of their mutual relationships and properties. Furthermore, we develop a novel estimation approach for which a closed-form solution can be given. Numerical results based on semi-synthetic data are presented to assess the relative accuracy of each method. Our results indicate that the estimators based on overlapping cells have the potential to improve spatial accuracy over traditional approaches based on Voronoi tessellations. 
\end{abstract}

\begin{IEEEkeywords}
Mobile Network Data, Call Detail Records, Spatial Density Estimation, Present Population. 
\end{IEEEkeywords}

\section{Introduction}
Most people nowadays carry a mobile phone. Mobile phones interact several times a day with the mobile network infrastructure, and every interaction reveals the approximate location of the phone to the network, at least at radio cell level. Such interactions are recorded by the Mobile Network Operator (MNO) for purposes related to the delivery of mobile communication services, e.g., billing, network optimization and troubleshooting. 
The class of MNO data includes the more common Call Detailed Records (CDR), fetched from the billing system, as well as the more informative signaling data, acquired by dedicated network monitoring systems \cite{metawin06}.

Since more than two decades, researchers have (re)used MNO data, most prominently CDR but increasingly also signaling data,
to study patterns of human mobility and presence, effectively exploiting the mobile network infrastructure  as a large-scale ``sensor of opportunity" for human mobility flows (see \cite{fiore2015survey} and references therein). More recently, MNO data have been used to analyze human mobility in the context of Covid-19 pandemic and associated containment measures \cite{JRCcovid1,JRCcovid2,covid2021} notwithstanding a number of open challenges in accessing and making use of such data \cite{nuria2020covid}. 

Beyond the scope of academic research,  companies and non-governmental organizations are offering insights and statistical products derived from MNO data in various application domains, from humanitarian support to analysis of tourism flows, and statistical organizations are looking with increasing interest at MNO data as a potential source for compiling new official statistics \cite{unsd2014,meersman2016,essnetWPIdeliverableI3}. 
 However, despite the great volume of research literature on the topic, several methodological aspects remain open along the journey from raw MNO data to reliable  summary information. 

The focus of this contribution is the problem of measuring the spatial density of mobile phones at a given reference time based on the data from a single MNO collected.  The density of mobile phones provides a rough proxy of the spatial distribution of humans in a given territory around that time, also called ``present population", ``hourly population" or ``\emph{de facto} population''.

The data processing flow, from raw MNO data to final density estimates, involves modeling the spatial coverage patterns of radio cells or groups thereof, i.e., mapping each radio cell to a geographical territory. This logical function,  called ``cell geolocation'' in earlier literature \cite{ricciatoworkingpaper,essnetWPIdeliverableI3},
can be performed in different ways with varying levels of sophistication as to what kinds of data are taken into consideration  and how the radio propagation phenomenon is modeled.   
Choosing a more  elaborate method accounts to attempting a more detailed and complex modeling of  the mobile network infrastructure in order to potentially, but not necessarily, achieve better accuracy.

Geolocation methods can be grouped into two classes, discriminated by whether the set of radio cell locations are modeled as spatially disjoint or, conversely, overlapping between cell locations is allowed. The methods involving disjoint (non-overlapping) cell locations, or ``tessellations", are simpler to implement and by far more popular in existing literature. 
Only recently, a few pioneering papers such as \cite{jrcstudy,ricciato16,moblocworkingpaper,salgado21} have started to consider alternative, more elaborate schemes based on overlapping cells. 

A separate module along the data workflow, logically subsequent to the geolocation module, performs the task of estimating (or inferring) the underlying spatial density from the set of geolocalized data. There is a fundamental inter-dependency between these two modules: if the geolocalization method of choice belongs to the specific class of tessellations, then the estimation task reduces to a simple area-proportional solution. Conversely, if the  geolocalization method belongs to the more general class of overlapping locations, determining the ``best" estimation method is an open research problem, that is the focus of the present  paper. 
Given this background, we make here the following contributions:
\begin{itemize}
\item We provide a coherent representation of the general processing workflow, from individual MNO records to spatial density estimates, with a clear definition of the geolocation and estimation modules. 
\item Focusing on estimation methods for overlapping cells,  we review three distinct solutions from previous literature and provide novel  analytical insights about their properties and mutual relationships. 
\item  We derive a novel ad hoc estimator with a closed-form solution.
\item We provide quantitative insights by comparing the accuracy of the different solutions on a semi-synthetic scenario.
\end{itemize}
Along the way, we identify directions for further research. 

The rest of this paper is organized as follows. 
We start by providing  a general view of the data transformation flow in Section \ref{sec:tutorialMNO}, from the input data sources to the desired output information,  and  formulate  the density estimation problem analytically. In Section \ref{sec:soa}, we review a number of  estimation solutions that were proposed earlier in the literature, for which  we present novel analytical results that re-interpret these methods and provide further insights about their mutual relationships, in Section \ref{sec:newresults}. 
In Section \ref{sec:map}, we propose a completely novel \emph{ad hoc}  estimator for the problem at hand.  
Illustrative numerical results based on (semi-)synthetic data are presented in Section \ref{sec:numres}. Finally, we conclude the paper in Section \ref{sec:conclusions}.

\section{Problem Formulation}\label{sec:tutorialMNO}

\subsection{Desired output}
We address the problem of estimating the spatial density  of mobile phones in a given territory around a reference time $t^*$ starting from the data collected from the network of a single mobile network operator. 
The mobile phone density obtained from single MNO data provides a rough proxy for the human population density, in the sense that the variations in space and time of two  distributions can be reasonably expected to be strongly correlated.  Therefore, the estimated  mobile  phone density can be used directly in those application where 
the spatio-temporal variations are of interest, i.e., where and when density increases or decreases. 
Furthermore,  mobile phone density estimation provides a basis  for the accurate estimation of human population density if additional calibration data are available\footnote{The accurate estimation of the \emph{absolute value} of human population density from mobile phone density is a non-trivial task since humans do not map one-to-one to mobile phones: some persons carry no phone, other persons carry multiple phones, and some phones are not carried by a single person or by no person at all. Calibration methods clearly depend on the kind of available calibration data, e.g., census data or survey data.}, but the aspects related to phone-to-human density transformation problem fall outside the scope of the present contribution.

\subsection{Data processing workflow}
In this section, we provide a general modular  view of the data processing workflow. For the sake of generality,  we make an abstraction effort and  distinguish, for each module, the  \emph{logical task} to be performed (\emph{what} the module does) from the particular method that may be adopted to perform such task  (\emph{how} the module does it). In this way, the proposed general workflow can be particularized to represent a broad set of different specific methodologies, allowing to explore  various trade-offs in the methodological solution space for the problem at hand.

\begin{figure*}[tb!]
\centering
\includegraphics[width=0.99\linewidth]{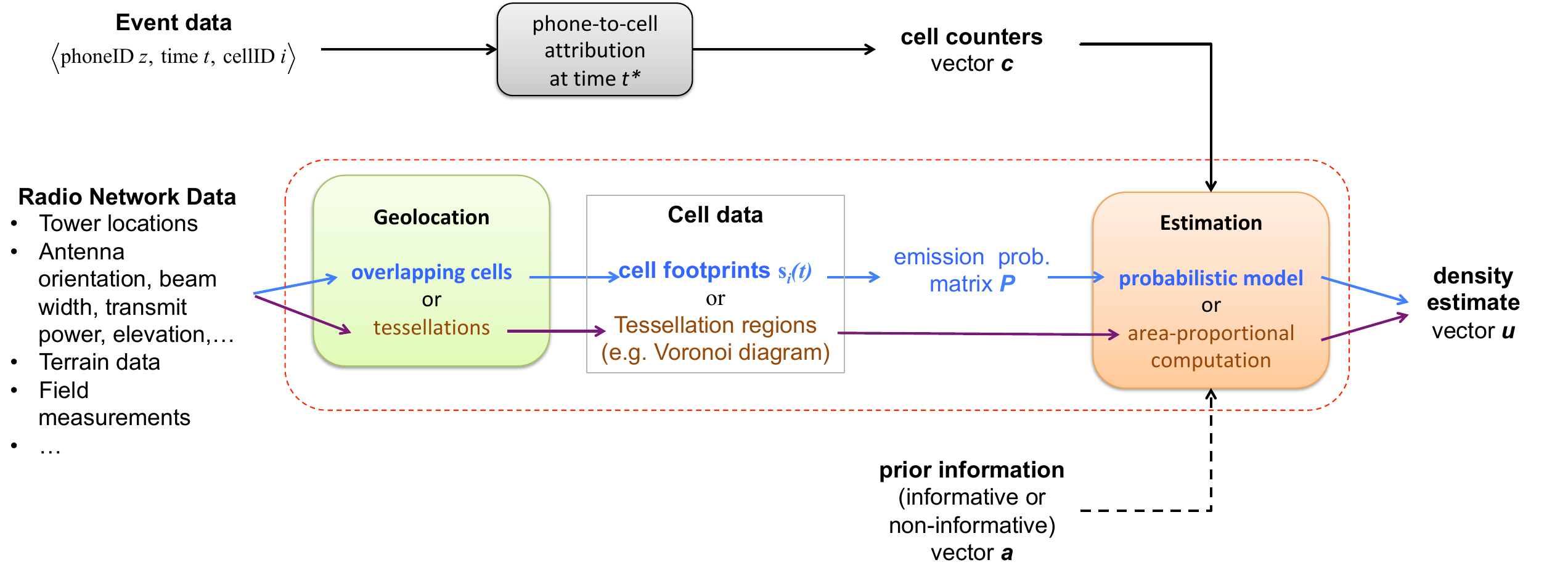}
\caption{Data processing workflow. The main focus of this work relates to the estimation module, based on a probabilistic model, associated to geolocation methods allowing  overlapping cells (upper path in the central box). The workflow for  tessellation methods (bottom path) may be considered as a particular, degenerate case of the previous method, where the emission probability matrix $\bm P$ reduces to the identity matrix and therefore can be omitted. Due to the absence of the stochastic component, the estimation module degenerates into a simple area-proportional computation. }
\label{fig:workflow}
\end{figure*}

\begin{figure}[tb!]
\centering
\includegraphics[width=0.99\linewidth]{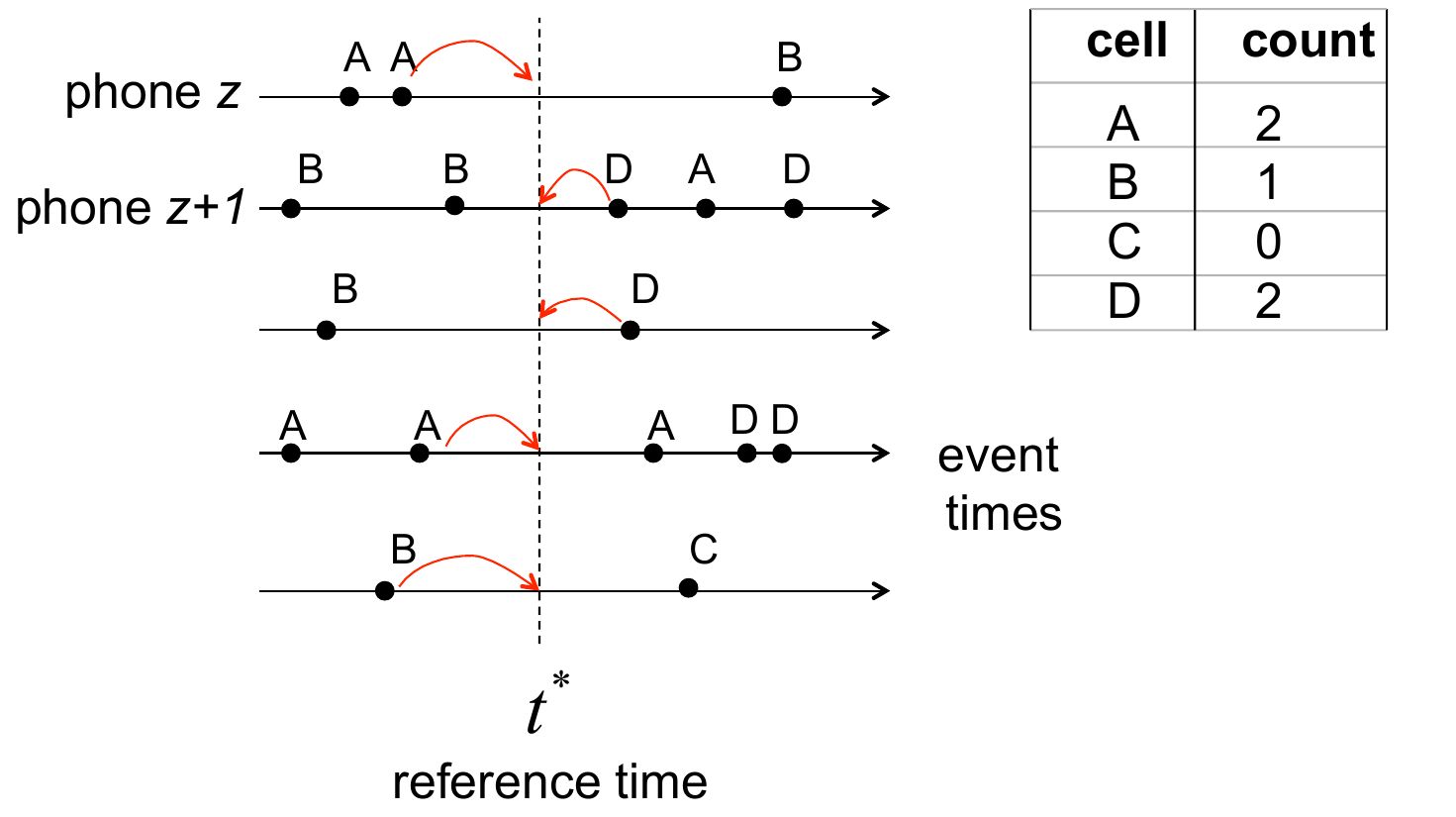}
\caption{Illustration of phone-to-cell association based on closest event.}
\label{fig:merging}
\end{figure}

The general workflow is sketched in Fig. \ref{fig:workflow}, separately for the two families of geolocation methods that are introduced below.

The processing workflow  takes two logically distinct sources of data in input, namely ``event data'' and  ``radio network data''. These data are pre-processed separately and then combined.\\

\par {\bf Event data} result from transaction events between the mobile phone and the network infrastructure, e.g., calls, data exchange or  signaling procedures.
    A mobile phone $z$ that at time $t$ interacts with the mobile network through the radio cell $i$ generates an event record $\langle z,t,i \rangle$.
    In practice $z$ represents the phone pseudonym\footnote{The phone pseudonym is typically based on some non-reversible hash of the International Mobile Subscriber Identifier (IMSI).}, $t$ the event timestamp and $i$ the radio cell identifier\footnote{
       Under  certain  conditions, e.g.,  signaling data captured on the Radio Access Network (RAN) links, additional variables which could help identify the mobile phone location at sub-cell level may be available in addition to the cell identifier $i$, e.g., the so-called Timing Advance (TA)  allowing to determine upper and lower bounds on the distance between the phone and the base station antenna, or radio measurements from neighboring cells as used, e.g., in Location Based Systems (LBS).  These additional variable(s)  should be considered absorbed by the the data element $i$ that, in such cases, should be reinterpreted as the union of the cell identifier and any other available variable carrying spatial information, e.g., TA or LBS. The use of additional information from the radio interface is still relatively infrequent (see \cite{calabrese} for a prominent example) but could become more important in future 5G deployments. Here, we do not elaborate further this scenario, and maintain the variable $i$ to denote solely the radio cell identifier.}.
    Depending on how the data collection system is configured, only a subset of all events are captured into the data set. Call Detail Records (CDR) and the more informative signaling data fall in this category. Recall that we aim at inferring spatial density with reference to particular time instant $t^*$ from the event records collected around that time. The phone-to-cell attribution module reduces the set of event data to a {\bf vector  of cell counters}  $\bm c \eqdef [c_1\ c_2\ \cdots\ c_I]^\mathsf{T}$ wherein the  $i$th element $c_i$ denotes the number of phones counted in cell  $i=1,\ldots, I$. A simple method to perform the phone-to-cell attribution is sketched in Fig. \ref{fig:merging}: for each generic phone $z$, the event record that is temporarily closest to the reference time $t^*$ is selected and the phone location is attributed to cell index $i$ therein. In this way, increasing the frequency of observation --- as done by considering full signalling data instead of less frequent CDR --- reduces the \emph{measurement noise} associated to the risk that the (unknown) phone position at the reference time $t^*$ falls outside the attributed cell $i$.\\
    
 \par {\bf Radio network data} relate to the deployment and configuration of the radio network infrastructure. Such data may be more or less detailed, ranging from merely a set of antenna tower locations to a very detailed set of full radio configuration parameters, e.g., antenna orientation, transmission power, beam width, frequency, etc. Based on the available radio network data, the 
``geolocation'' module derives a set of {\bf cell data} that aim at binding the radio cells to the spatial territory. 
 
 \subsection{Geolocation approaches}
 The availability of more complete radio network data offers the possibility to adopt more sophisticated geolocation methods in order to produce more detailed cell data. Conversely, if only minimal  radio network data are available, there is no choice but to adopt very simple geolocation methods to produce undetailed cell location data. For the sake of generality we need to identify a cell data format that can accommodate the whole range of possibilities, including but not limited to very simple ones. To  this aim we introduce the notion of ``cell footprint" as explained below. 

Let us consider a discretization of the geographical territory in a regular grid where  grid units are indexed in $j =1,\ldots,J$. For a generic cell $i$ at the reference time\footnote{The temporal component of cell data cannot be ignored, since  network coverage is not static but changes in time due to the continuous addition and deletion of radio cells, to the manual or automatic (re)optimization of radio parameters and other phenomena that are intrinsic to the functioning of mobile networks such as, e.g., the so-called ``cell breathing" effect in 3G, whereby the cell coverage area shrinks as the local network load increases  \cite{Mishrabook}.} $t^*$  let the non-negative quantity $s_{ij} \geq 0$ encode the degree by which  cell $i$ is expected to ``cover" the point $j$ in space.  The collection of such variables for a single cell $i$, formally   $\bm s_i \eqdef [s_{i1} \cdots s_{iJ}]^\mathsf{T}$, represents the ``cell footprint". 
  Depending on the assumed geolocation model, the  variables $s_{ij}$'s can be restricted to take binary values, thus leading to on/off flat coverage patterns for each individual cell,   or alternatively they could be allowed to take continuous values that capture the strength of the cell $i$ signal around point $j$ --- an approach first proposed in \cite{moblocworkingpaper,moblocpresentation} and then adopted also in \cite{ricciatoworkingpaper,salgado21}. 

If cell data are expressed in the form of cell footrpint, then the task of the geolocation module is to determine the set of cell footprints from the available radio network data. Radio cells that have identical footprints can be grouped together, with index $i$ now referring to the whole cell group.   
There are numerous geolocation methods, each making use of more or less information and with different degrees of sophistication. Every geolocation method  has, explicitly or implicitly, an underlying assumption about the process by which a generic mobile phone selects a radio cell to connect.
Generally speaking, following \cite{ricciatoworkingpaper} we can classify all possible approaches into two large families: 
\begin{itemize}
    \item {\bf Tessellations}: methods that partition the territory into a set of disjoint (non-overlapping) footprints associated to different cells or groups thereof.
    \item {\bf Overlapping cells}: methods that allow cell footprints to overlap.
\end{itemize}
The family of tessellation approaches may be seen as a particular (degenerate) case of the more general overlapping cells approach. 
With tessellations, cell data reduce to polygons or grid coordinates, and the general workflow  simplifies as sketched in Fig. \ref{fig:workflow}. 

\subsection{Geolocation methods based on tessellations}

The majority of earlier studies based on MNO data have adopted tessellation methods relying on the  Voronoi partitioning principle. We recall that, for a discrete set of $k$ points (Voronoi seeds) in a bounded planar region, a Voronoi partition divides the  region into a set of $k$ non-overlapping sub-regions (Voronoi polygons) whereby each point in space is associated to the closest seed. A Voronoi partition is well defined, in the sense that a single partition (Voronoi diagram) exists for a given set of seeds in the bounded region. However, there are multiple ways of projecting a set of mobile radio cells into a set of Voronoi seeds and multiple ways of bounding a region of interest, and each method results in a different Voronoi diagram for the same input set of cells. 

The simplest and by far most popular method takes antenna tower positions as seeds. As, in real networks, multiple radio cells have their antennas co-located on a single antenna tower, the number of unique seeds, hence polygons, is lower than the number of radio cells. This method does not require any  information about the radio cell configuration other than the antenna tower position, and is therefore very simple to implement. The implicit modeling assumption underlying this method is that mobile phones always connect deterministically to the closest antenna tower. This assumption ignores several fundamental aspects of mobile communications such as antenna directionality, power control, load balancing and  multi-layer radio deployments \cite{Mishrabook}.

Another variant, first considered in \cite{csis115,jrcstudy,ricciato16} and more recently also in \cite{bachir2019}, places seeds at the barycenter of radio cell coverage area. 
This  requires additional knowledge about the radio cell configuration parameters in case of directional cells (e.g., azimuth orientation, beam width, coverage range) in order to determine, at least roughly, the cell coverage area and then the barycenter.

The method presented by the Belgian operator Proximus in \cite{meersman2016} is more articulate. First, they distinguish between large cells (macro and micro cells)  and small cells (femto and pico cells) and apply Voronoi partitioning only to the large cells. 
Second,  in order to take into account the directionality of cell sectors, they place $N$ Voronoi seeds in the vicinity of each $N$-sector antenna (typically $N=3$ for $120^\circ$ sectors) with a small offset in their respective azimuth directions.

Generally speaking, all variants of the Voronoi approach have the disadvantage that even a single cell change, e.g.,  addition, removal or shift of its associated seed, produces a change in the neighboring cell footprints and consequently requires a new computation of the entire Voronoi diagram. 
Departing from the Voronoi approach, other forms of tessellations may be obtained by mapping each cell to one particular unit of a predefined partition, e.g., a regular square grid  (as in \cite{fiore2016}) or a variable resolution quadtree \cite{quadtree2017}. 

Whether based on Voronoi or an independent fixed grid, the more sophisticated variants of tessellation methods require additional information about the radio cell configuration, beyond the antenna location. However, as more detailed cell data are made available for the geolocation process, the limitation of considering non-overlapping footprints seems to be less justified, motivating the interest for methods accounting for the intrinsic overlapping nature of real-world radio cells. 

\subsection{Geolocation methods based on overlapping cell locations}

This family of methods explicitly takes into account the fact that real-world radio cells overlap by design.
To the best of our knowledge, the first work to  consider an overlapping cells approach for the problem of density estimation was proposed by Ricciato et al. in \cite{jrcstudy} and \cite{ricciato16} (see also the earlier work \cite{ricciato15} for a different application). Therein, the authors consider cell footprints that are ``flat", i.e., for a given radio cell $i$, a point $j$ in space is either included or excluded from its coverage area.

A more sophisticated approach was developed recently by Tennekes et al. in \cite{moblocworkingpaper,moblocarxiv} (see also the earlier presentation \cite{moblocpresentation}) and implemented in the \texttt{mobloc} R package \cite{mobloc}. 
This approach considers a non-uniform profile whereby the parameters $s_{ij}$'s are continuous and vary within the cell footprint, having a physical interpretation connected to  the received signal strength. 
The propagation model in \cite{moblocworkingpaper} also uses elevation maps and land use --- the latter only to estimate the path-loss exponent.

The determination of the cell footprints $\bm{s}_i$'s is a critical modeling task: the set of possible implementation solutions is wide,  ranging from elementary  geometric models as in \cite{jrcstudy,ricciato16} to more articulate but still parsimonious parametric propagation models as done in \cite{moblocworkingpaper}, and even  more sophisticated (and less parsimonious) empirical  numerical models that take into account more detailed data about the territory (elevation maps, type of buildings) and about the radio network dynamics (power control, inter-cell interference, etc.), possibly reusing data and tools that are already available and used regularly for radio network planning and optimization tasks.   

Regardless of how the cell footprints $\bm{s}_{ i}$'s are empirically determined, the next step in the modeling process is to encode such information into a probabilistic data generation model that will then serve as basis for the development of an estimation (inference) method. In the next section, we first introduce the probabilistic model, and then explain how this model is linked to the geolocation method of choice in the specific context of MNO data analysis.

\subsection{Probabilistic model}
\label{sec:modelstatement}

For the rest of this paper, we assume that the  geographic territory is discretized into a regular square grid. This assumption enables the use of vector notation and thus simplifies the formalism  as well as the software implementation of the considered methods.

We focus on the estimation problem at a given reference time,  therefore the temporal dimension can be dropped from the formalism. 
We resort to the term ``tile" to refer to the individual square grid unit, reserving the term ``cell"  to denote the radio cell constituting the mobile (or \emph{cellular}) radio network. 

Let the $j$th element $u_j$ of the column vector $\bm u \eqdef [u_1  \cdots u_J]^\mathsf{T}$  denote the unknown number of mobile phones in tile $j=1,\ldots, J$.
Let the $i$th element $c_i$ of the
column vector $\bm c \eqdef [c_1  \cdots c_I]^\mathsf{T}$ denote the observed number of mobile phones  counted in cell  $i=1,\ldots, I$ (cell count vector). We denote the total number of phones across all cells by 
$\myctot \eqdef \sum_{i=1}^I{c_i}  = \bm{1}_I^\mathsf{T} \bm{c}$ where the symbol $\bm{1}_k$ denotes a  column vector of size $k$ with all elements equal to 1.
We denote by $p_{ij}\in [0,1]$ the probability that a generic phone placed in tile $j$ will be detected  (counted) in cell $i$. In other words, $p_{ij}$  represents the conditional probability:
\begin{equation}
p_{ij} \eqdef \text{Prob}\left\{ \text{detected in cell $i$} \; | \;   \text{placed in tile $j$} \right\}.
\label{eq:pij}
\end{equation}
The $p_{ij}$'s are called ``emission probabilities'' in  the field of emission tomography \cite{shepp77} and we retain here the same term. Their value can be instantiated based on the cell data in output from the geolocation model, as explained below in Subsection \ref{sec:emissionprob}. 

For the sake of a more compact notation we gather the individual probabilities $p_{ij}$'s into a matrix $\bm P_{[I \times J]}$.
The elements of $\bm P$ sum up to one along columns, formally ${\bm 1}_I^\mathsf{T} \bm P   = {\bm 1}^{\mathsf{T}}_J $, meaning that the matrix $\bm P$ is column-stochastic.

The (measured) cell count vector $\bm c$ can be interpreted as the single realization of a random vector $\tilde{\bm c}$ whose expected value is given by:
\begin{equation}
\overline{\bm c} \eqdef \mathrm{E}[ \tilde{\bm c}] = \bm P \bm u.
\label{eq:forward}
\end{equation}
In the estimation problem we must solve for estimand $\bm u$ given the vector of measurement data $\bm c$, representing the single available observation of $\tilde{\bm c}$, and the model matrix $\bm P$.   This being  a  type of inversion problem,
the estimate  $\hat{\bm u}$ 
can be written in general as:
\begin{equation}
\hat{\bm u} =  g( \bm P, \bm c)
\label{eq:backward}
\end{equation}
where $g(\cdot)$ denotes the estimator of choice. 
It is evident that the estimand variables must be constrained to be non-negative, i.e.,
$u_i \geq 0, \; \forall i$. However, we relax the constraint that such variables should be integers since the rounding error can be safely neglected vis-à-vis other sources of uncertainty.

In practical deployments, the number of tiles is (much) larger than the number of cells, i.e., $J \gg I$, and moreover $J\gg 1$ (for instance, $J=160,000$ and $I$ around 600 in the simulation scenario considered in Section \ref{sec:numres}).  Therefore, even if $ \overline{\bm{c}}$ were perfectly known, the direct inversion of Eq. \eqref{eq:forward} would constitute an under-determined problem. 
For that reason, the estimation problem in this case is affected by issues of \emph{structural non-identifiability}, as per the definition given in \cite{raue13}. 
Any additional external information that is available to help the estimation process (e.g., prior distributions, spatial constraints derived from geographical maps, known structural properties of the desired solution) can be embedded in the estimator $g(\cdot)$ to resolve, or at least reduce, the ambiguity among multiple solutions, as we will elaborate upon later.

If two generic tiles $j_1$ and $j_2$ yield equal emission probabilities for all cells, i.e., $p_{ij_1}=p_{ij_2} \; \forall i$, and are associated with the same prior values (in case prior information is used in the estimate) then they are indistinguishable from each other. In such a case, their respective estimand variables $u_{j_1}$ and  $u_{j_2}$ are perfectly \emph{collinear} in the estimation problem and there is no way to resolve differences between them. Therefore, it makes sense to merge both tiles into a single super-tile, and then compute a single estimate for the latter\footnote{The concept of super-tile that is introduced here  algebraically  corresponds to the concept of  
``section''  introduced geometrically in the earlier study \cite{ricciato16}.}. Algebraically, this corresponds to merging identical columns of  matrix $\bm P$. We refer to this operation by the term ``consolidation''. Note that the consolidation process does not imply that the resulting (consolidated) matrix is necessarily full rank, i.e., it does not guarantee that the resulting consolidated instance of the problem is fully identifiable.

\subsection{Linking cell data to emission probabilities}
\label{sec:emissionprob}
The emission matrix $\bm P$ plays the role of a known input  parameter to the estimation problem. Its value must be determined from the cell data in output from the geolocation module (ref. to Fig. \ref{fig:workflow}). To this aim, a natural choice is to follow the approach first proposed by \cite{moblocworkingpaper} and later adopted by other work \cite{salgado21,essnetWPIdeliverableI3}. 
This model assumes that a mobile phone in tile $j$ selects cell $i$ with a probability that is proportional to its signal dominance \emph{relative}  to the other concurrent cells in the same tile, formally:
\begin{equation}
    p_{ij} = \frac{s_{ij}}{\sum_{m=1}^{k_j}{s_{mj}}}
    \label{eq:s2p}
\end{equation}
with $k_j$ denoting the number of concurrent cells at tile $j$.

This general model includes, as a particular case, the approach  considered in the earlier work \cite{ricciato16} where cell coverage was assumed to be on/off. In this case the  variables $s_{ij} \in \{0,1\}$ are limited to take binary values, hence Eq. \eqref{eq:s2p}  leads  the non-zero elements to take fractional values $p_{ij} = 1/k_j$, meaning that all $k_j$ cells covering tile $j$ have the same probability of being selected.

Tessellations may be seen as a further special case of the particular on/off coverage case above:  as each point in space is covered by exactly one and only one cell ($k_j=1$) the emission probabilities reduce to  binary values   
$p_{ij} \in \left\{0,1 \right\} $.  In other words, when cells are mutually non-overlapping, 
the data generating process becomes \emph{deterministic}: the binary model matrix $\bm P$ after consolidation reduces to the identity matrix, and the estimation problem becomes trivial
as far as no external information is taken into account (such as, e.g., prior information in Bayesian settings as considered in \cite{benjamin2018}).
In this contribution, we focus  on the more general (and non trivial) case of overlapping cells, with the understanding that the proposed solution will be applicable also to  the special case of tessellations.

\section{Estimation methods from literature}\label{sec:soa}
Whatever approach is chosen to instantiate the model matrix $\bm P$ (input parameter), a suitable resolution procedure is needed to compute the function $g(\cdot)$ in Eq. \eqref{eq:backward}.
This is the focus of the remaining part of this contribution. 
In this section we present three different solutions proposed in recent literature. To the best of our knowledge, no other solution beyond these three was previously considered  for the problem at hand. 

\subsection{MLE-Multinomial}\label{sec:mlemulti}
The method elaborated in \cite[Section 5.3]{ricciato16} (previously appearing in \cite{jrcstudy}) derives a Maximum Likelihood Estimator (MLE) based on the hierarchical generative model sketched in Fig. \ref{fig:hierarchicalmultinomial}.
\begin{figure}
\centering
\subfigure[Multinomial]{\includegraphics[width=0.42\linewidth]{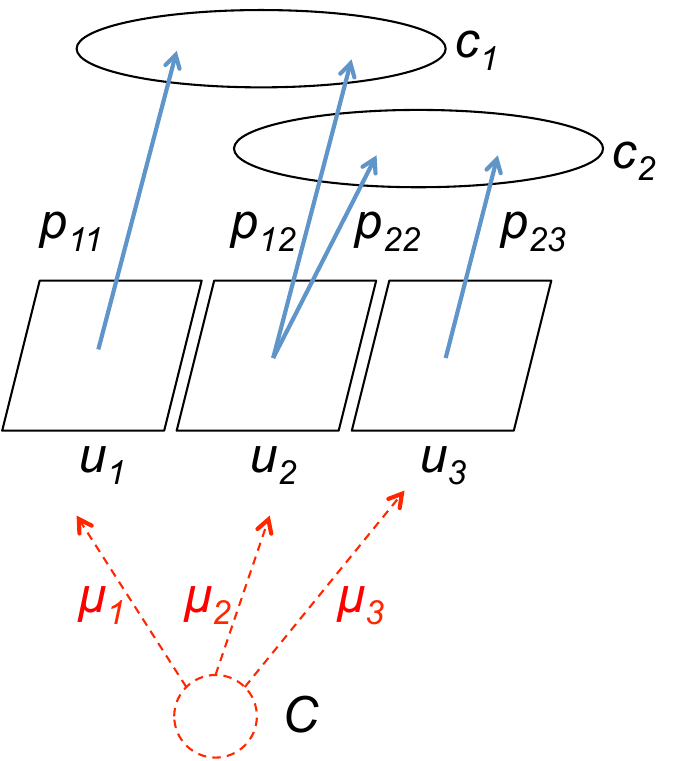}\label{fig:hierarchicalmultinomial}}
\subfigure[Poisson]{\includegraphics[width=0.55\linewidth]{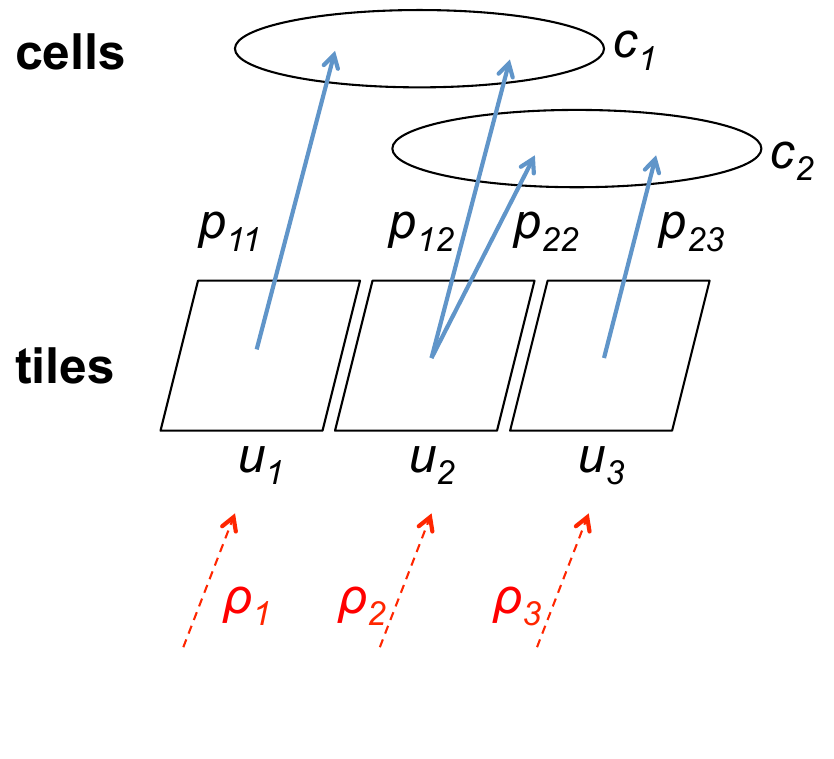}\label{fig:hierarchicalpoisson}}
\caption{Hierarchical generative models.}
\label{fig:hierarchical}
\end{figure}
In the first layer, starting from a single root pool, the $\myctot$ phones are randomly and independently allocated to the $J$ tiles
according to a vector of probabilities $\bm \mu \eqdef [\mu_1 \cdots \mu_J]^\mathsf{T}$.  Clearly, $\bm \mu$  is a vector with non-negative elements  summing up to unity, i.e.,  $\bm \mu \geq \bm 0$ and 
$ \bm{1}_J^\mathsf{T} \bm{\mu} =1$.
The resulting number of units in the tiles 
is  distributed as a multivariate
random vector $\tilde{\bm u}$ with Multinomial distribution (denoted by $\mathcal{M}$) of parameters $\myctot$ and $\bm \mu$, i.e., 
\begin{equation}
\tilde{\bm u} \sim  
\mathcal{M}\left(\myctot, \bm \mu \right)
\eqdef \frac{\myctot!}{\tilde{u}_1! \tilde{u}_2! \cdots \tilde{u}_J!} \prod_{j=1}^{J}{ \left( \mu_j\right)^{\tilde{u}_j}}
\label{eq:multinomial1}
\end{equation}
with  $\tilde{\bm u} \geq \bm 0$ and $\bm{1}_J^\mathsf{T} \bm{\tilde{u}}=\myctot$.
Therefore by construction the mean value can be written as $\overline{\bm u} \eqdef \E[\tilde{\bm u} \,|\, \myctot, \bm \mu]=\myctot \bm \mu$.
In the second layer (ref. Fig. \ref{fig:hierarchicalmultinomial}) the units are assigned randomly and independently from tiles to cells according to the emission matrix $\bm P$.
Due to the independence of the random assignments at the two layers --- from the root pool to tiles, and from tiles to cells --- the  random vector $\tilde{\bm c}$ also has a Multinomial distribution with parameters  $\myctot$ and $\bm P \bm \mu$:
\begin{equation}
\tilde{\bm c} \sim  
\mathcal{M}\left(\myctot, \bm P \bm \mu \right)
\eqdef \frac{\myctot!}{\tilde{c}_1! \tilde{c}_2! \cdots \tilde{c}_I!} \prod_{i=1}^{I}{ \left( \sum_{j=1}^J{ \mu_j\, p_{ij}}\right)^{\tilde{c}_i}}.
\label{eq:multinomial2}
\end{equation}
Thus, the log-likelihood function for an observed value  $\bm c$ is derived as (omitting irrelevant terms):
\begin{equation}
 \mathcal{L}(\bm \mu ; \bm c, \bm P) =  \sum_{i=1}^I{ c_i \log{ \sum_{j=1}^J{\mu_j p_{ij}}}}
 = \bm c^\mathsf{T} \, \log{ \bm P  \bm \mu}
 \label{eq:loglike0}
\end{equation}
wherein we have used the  logarithm of the vector notation $\log{\bm x} \eqdef \left[ \log{x_1} \ \log{x_2} \ \cdots \right]^\mathsf{T}$ to refer compactly to the vector of element-wise logarithms.  
Let $\hat{\bm \mu}_\text{MLE}$ denote the value of the probability vector $\bm \mu$ that maximizes the log-likelihood Eq. \eqref{eq:loglike0}  subject to the constraints  $\bm \mu \geq \bm 0 $ and $\bm{1}_J^\mathsf{T} \bm{\mu}  = 1$, formally:
 \begin{equation}
\hat{\bm \mu}_\text{MLE}
= \argmax_{ \substack{ \bm{1}_J^\mathsf{T} \bm{\mu} = 1 \\  \bm \mu \geq \bm 0} }{ \, \bm c^\mathsf{T}  \log{ \bm P  \bm \mu}  }.  
\label{eq:MLfwithlambda}
 \end{equation}
 By definition $\hat{\bm \mu}_\text{MLE}$ represents the Maximum Likelihood (ML) estimate for $\bm \mu$ given the observed data $\bm c$, but still we have to provide a point estimate for $\bm u$. A natural choice (also taken by Shepp and Vardi in their seminal paper \cite{shepp77} for the method presented in the next subsection) is to take the corresponding  \emph{expected mean value} 
 as the final estimate,  formally:
 \begin{equation}
 \hat{\bm u}_\text{MLE} \eqdef \E[\tilde{\bm u} \,|\, \myctot, \bm \hat{\bm \mu}_\text{ML}] = \myctot \, \bm \hat{\bm \mu}_\text{MLE}
 \label{eq:MLML}
 \end{equation}
where we retain the label ``MLE" for simplicity.
This equation represents a simple rescaling of $ \hat{\bm \mu}_\text{MLE}$ by the factor $\myctot$. Therefore, with a simple variable substitution, the minimization Eq. \eqref{eq:MLfwithlambda} can be rewritten to deliver the final estimate $\hat{\bm u}_\text{MLE}$ directly,  leading to the following constrained optimization problem: 
 \begin{equation}
\hat{\bm u}_\text{MLE}  
= \argmax_{ \substack{ \bm{1}_J^\mathsf{T} \bm{u}= \myctot \\  \bm u \geq \bm 0} }{ \, \bm c^\mathsf{T}  \log{ \bm P  \bm u}  }  
\label{eq:MLfinal_u}
 \end{equation}
In the original papers \cite{jrcstudy,ricciato16} the minimization in Eq. \eqref{eq:MLfinal_u} is conducted via standard numerical solvers. No remark is made therein about the uniqueness (or lack thereof) of the solution. 

\subsection{MLE-Poisson}\label{sec:mlepois}

Recently, the authors of \cite{cbsnetmob2019} have considered to apply the ML estimator developed in the field of emission tomography by Shepp and Vardi in \cite{shepp77}  to this problem. 
Like the previous approach, this method is also based on a hierarchical generative model with two layers, as sketched in Fig. \ref{fig:hierarchicalpoisson}, but now the elements of $\tilde{\bm u}$ are modeled as independent Poisson (instead of Multinomial) random variables with parameters $\bm \rho \eqdef [\rho_1 \cdots  \rho_J]^\mathsf{T}$, i.e., $\tilde{u}_j \sim \mathcal{P}(\rho_j)$ ($\mathcal{P}$ denoting the Poisson distribution). 
Following the same reasoning that led to Eq. \eqref{eq:MLML}, the ML estimate of $\bm \rho$ is taken as the estimate for $\bm u$ (see \cite[p. 113]{shepp77}). 
The log-likelihood function is not given explicitly in  \cite{shepp77} but it is proved that the ML estimate can be computed iteratively through an Expectation Maximization (EM) procedure: at the generic iteration $m$ the new estimate $\hat{u}_j^{m+1}$  is computed from the previous estimate $\hat{u}_j^{m} $ according  to the following formula (see \cite[Eq. (2.13)]{shepp77} or, equivalently, \cite[Eq. (2)]{cbsnetmob2019}):  
\begin{equation}
\hat{u}_j^{m+1} = \hat{u}_j^{m}  \cdot \sum_{i=1}^I{  c_i \frac{p_{ij} }{ \sum_{k=1}^J{ p_{ik} \hat{u}_k^{m} }} }.
\label{eq:shepp}
\end{equation}
The authors of the original paper \cite{shepp77} warn that the initialization point should not contain zero elements, and by default assume a flat (uniform) initial solution $\hat{u}_j^0=\frac{\myctot}{J}$, $j=1,\ldots,J$.

\subsection{A simple estimator based on Bayes' rule (SB)}\label{sec:bayest}

The simple procedure presented hereafter was adopted in the \texttt{mobloc} R package developed by Tennekes et al. \cite{mobloc} and elaborated in \cite{moblocworkingpaper} (see also \cite{moblocpresentation}). 
We shall refer to this method as the ``Simple  Bayes-rule estimator" (SB for short). 
Let $q_{ji}$ denote the  conditional probability:
\begin{equation}
q_{ji} \eqdef \text{Prob}\left\{ \text{placed in tile $j$}   \; | \;   \text{detected in cell $i$} \right\}.
\label{eq:qji}
\end{equation}
Note the inversion of the conditioning direction between $q_{ji}$ and  $p_{ij}$ defined earlier in Eq. \eqref{eq:pij}. 
Let 
\begin{equation}
\alpha_{j} \eqdef \text{Prob}\left\{ \text{placed in tile $j$}  \right\}
\end{equation}
denote the  prior probability that a single generic unit falls in tile $j$, before observing the measurement data. 
Recalling the  Bayes rule 
\begin{equation}
\text{Prob}\left\{ j  \, | \,  i  \right\}   = \frac{  \text{Prob}\left\{ i  \, | \,  j  \right\}  \cdot \text{Prob}\left\{   j  \right\} }{ \text{Prob}\left\{   i  \right\} }
\label{eq:bayes}
\end{equation}
it follows that 
\begin{equation}
q_{ji} 
= \frac{ p_{ij}  \, \alpha_j }{\sum_{k=1}^J{ p_{ik}  \, \alpha_k }}.
\end{equation}
Therefore,  the 
estimate in each tile $j$ is computed directly as
\begin{equation}
    \hat{u}_{j,\text{SB}} = \sum_{i=1}^I{q_{ji} \, c_i }= \alpha_j \, \sum_{i=1}^I{  c_i  \frac{ p_{ij}  }{\sum_{k=1}^J{ p_{ik}  \, \alpha_k }} }.
        \label{eq:SBelement}
\end{equation}

\section{Insights from existing estimators}\label{sec:newresults}
In this section we reinterpret  previous solutions and present our new results. More specifically, the analytical insight presented in this Section will serve as the basis for the development of a novel estimation method in  Section \ref{sec:map}.

\subsection{The prior vector}
\label{sec:remarksinitial}

All three methods reviewed above provide a point estimate $\hat{\bm u}$ in output,  based on the model matrix $\bm P$ and observed data $\bm c$ in input. To do so, they all require, implicitly or explicitly, the provision  of an ``initial point'' as input to the computation. The initial point takes either the form of a (stochastic) vector of $J$ non-negative elements summing up to unity, hereafter denoted by  $\bm \alpha \eqdef [\alpha_1 \cdots  \alpha_J]^\mathsf{T}$ (with $\bm{1}^\mathsf{T}_J\bm{\alpha} =1$ and $\bm \alpha \geq \bm 0$) or, equivalently, of its rescaled version 
$\bm a \eqdef \myctot \bm \alpha$ (with $\bm{1}^\mathsf{T}_J \bm{a}=\myctot$ and $\bm a \geq \bm 0$). Such  an initial vector represents our ex-ante ``best guess"  about how the mobile units may be spatially distributed, \emph{before seeing the data}; in this sense, as already noted by Shepp and Vardi in \cite{shepp77}\footnote{Quoting from \cite{shepp77}: ``As a point of philosophical interest, the choice of the initial estimate is somewhat akin to the choice of a Bayes prior but there is actually no prior measure.''}, it represents a sort of prior information, and for this reason we shall refer to $\bm \alpha$ (or equivalently to $\bm a$) as the ``prior vector''.
Note however the difference between the notions of ``prior vector'' $\bm{\alpha}$ (or equivalently $\bm{a}$) and ``prior probability distribution'' $\mathbb{P}(\bm{u}; \bm{\alpha})$: the former is a vector of scalars that can serve as parameters for the latter, which is conversely a function of the data $\bm{u}$ parameterized in $\bm{\alpha}$ and having the properties of a probability distribution (i.e., taking on positive values and summing up to one). This distinction will become more evident in Section \ref{sec:map}, where the Bayesian MAP and related estimators are introduced.

The role of the (non rescaled) prior vector $\bm \alpha$  is explicit in the SB estimator Eq.  \eqref{eq:SBvector} where its elements represent prior probabilities (per tiles).
As for MLE-Poisson, the (rescaled) prior vector $\bm a$  serves as the initial point for the iterative procedure Eq. \eqref{eq:shepp} by setting $\hat{\bm u}^0 = \bm a$.
Similarly,  $\bm a$  is the natural starting point for the numerical minimization process in Eq.  \eqref{eq:MLfinal_u} to compute the  MLE-Multinomial estimate. In summary, all three methods rely on a prior vector (rescaled or not) as the initial point for the computation. 

A point of caution is needed in determining the value of the  prior vector, particularly  concerning its zero elements. It can be immediately recognized that setting the initial value $\alpha_j=0$, or equivalently $a_j=0$, for a generic tile $j$ will force that tile to maintain a zero value in the final solution. This holds true both for the EM iterative procedure based on Eq.  \eqref{eq:shepp}, since zeros are stable points of the iteration due to its  multiplicative structure, and for the direct computation of SB based on Eq.  \eqref{eq:SBvector}. In other words, with both these methods, namely SB and MLE-Poisson with EM, zeroing the $j$th element in the prior vector  overrides any possible contribution carried by the measurement data for the corresponding tile $j$. This is equivalent to completely excluding \emph{a priori} the corresponding tile $j$ from the computation. But if we intend to do so, it is certainly more practical to drop the variable associated with such tile  upfront, rather than instantiating a variable whose final value is fixed ex-ante to zero. By eliminating the zero tiles upfront, we can practically assume that all remaining elements in the prior vector take non-zero values. 

In practical settings, the prior vector can be instantiated based on land use data obtained by other sources, e.g., satellite imagery or land use surveys as done in \cite{benjamin2018,batista2020} in the context of Voronoi tessellations. Such data can give indications about which areas are more or less likely to host humans, and therefore may increase the accuracy of the final estimate. However, translating external data into a meaningful prior information still requires some modeling choices. For instance, whether an outdoor  leisure area is more or less likely to attract visitors compared to a densely built area depends upon contextual factors, e.g.,   weather conditions or calendar day (working day vs. holiday). For this reason, the use of such information should be dealt with judiciously in order to avoid introducing biasing errors.  

\subsection{MLE-Multinomial and MLE-Poisson}\label{sec:equivalenceMLE}

We show that the two MLE procedures  presented earlier in Section \ref{sec:mlemulti}
and \ref{sec:mlepois} are equivalent in the sense that they yield exactly the same set of solutions. Beforehand, we need a technical lemma summarized in the following Proposition.

\begin{proposition}\label{prop_convex}
The log-likelihood Eq.  \eqref{eq:loglike0} is concave, but not strictly concave  and the constraints are linear, so the overall problem is concave but not strictly concave. As a consequence, all stationary points are equivalent global maxima. 
\end{proposition}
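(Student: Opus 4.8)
The plan is to establish three things in sequence: (i) the objective $\mathcal{L}(\bm\mu;\bm c,\bm P) = \bm c^\mathsf{T}\log(\bm P\bm\mu)$ is concave (so its negative is convex); (ii) it is not \emph{strictly} concave, by exhibiting a direction along which it is affine or by pointing at the structural non-identifiability already noted in the Remark; and (iii) concavity of the objective plus linearity (hence convexity) of the feasible set $\{\bm\mu\ge\bm 0,\ \|\bm\mu\|_1=1\}$ implies every stationary point is a global maximizer, and all such maximizers attain the same optimal value.

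For (i), I would argue as follows. For a single term, $\mu\mapsto\log(\bm p^\mathsf{T}\bm\mu)$ with $\bm p\ge\bm 0$ is the composition of the concave function $\log$ with an affine map, hence concave. A nonnegative weighted sum (weights $c_i\ge 0$) of concave functions is concave, so $\mathcal{L}$ is concave on the open set where $\bm P\bm\mu>\bm 0$ componentwise (in particular on the relative interior of the feasible simplex, using the non-zero prior convention of Sec.~\ref{sec:remarksinitial}). Equivalently one can compute the Hessian: $\nabla^2\mathcal{L} = -\sum_i c_i\, \dfrac{\bm p_i\bm p_i^\mathsf{T}}{(\bm p_i^\mathsf{T}\bm\mu)^2} = -\bm P^\mathsf{T}\mathrm{diag}\!\big(\bm c\oslash(\bm P\bm\mu)^{\circ 2}\big)\bm P \preceq \bm 0$, where $\bm p_i^\mathsf{T}$ is the $i$th row of $\bm P$; this is negative semidefinite, giving concavity.

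For (ii), the same Hessian expression shows non-strictness: it is a Gram-type matrix of rank at most $\mathrm{rank}(\bm P)\le I$, while $\bm\mu\in\mathbb{R}^J$ with $J>I$ (the regime of the Remark), so the Hessian has a nontrivial null space; along any $\bm v\ne\bm 0$ with $\bm P\bm v=\bm 0$ (and $\bm 1_J^\mathsf{T}\bm v=0$, to stay feasible) the objective is constant, so it is not strictly concave — indeed this is exactly the structural non-identifiability. For (iii), the standard convex-analysis fact is that for a concave function over a convex set, a feasible point is a global maximizer iff the first-order (KKT / variational-inequality) stationarity condition holds there; since the feasible set is a polytope and there is no duality gap, every KKT point is a global maximizer, and by definition all global maximizers share the common optimal value. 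The phrase ``all stationary points are equivalent global maxima'' should be read in this KKT sense (stationary = satisfying the first-order optimality conditions on the simplex), which I would state explicitly to avoid the objection that an interior critical point of the unconstrained $\mathcal{L}$ need not exist.

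The main obstacle is not the calculus but the precise reading of ``stationary point'': on the constrained domain one must phrase stationarity via KKT/projected-gradient conditions rather than $\nabla\mathcal{L}=\bm 0$, and one must be careful that the boundary of the simplex is where $\bm P\bm\mu$ may have zero entries and $\log$ blows up — which is handled by the Sec.~\ref{sec:remarksinitial} convention that drops a-priori-zero tiles, keeping the relevant optimum in the relative interior where $\mathcal{L}$ is smooth and concave. A secondary point worth a sentence is that, although the \emph{value} is unique, the \emph{maximizer set} is generally a whole face/polytope (again by non-identifiability); the Proposition only claims equivalence of the maxima, which is consistent with this.
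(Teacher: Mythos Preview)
Your proposal is correct and follows essentially the same route as the paper: both compute the Hessian (the paper element-wise, you in the compact Gram form $-\bm P^\mathsf{T}\mathrm{diag}(\cdot)\bm P$), establish negative semidefiniteness, and invoke the rank deficiency $\mathrm{rank}(\bm P)\le I<J$ to get non-strictness. Your additional remarks --- the composition-of-concave-with-affine shortcut for (i), and the explicit KKT reading of ``stationary point'' on the simplex --- are welcome clarifications that the paper omits, but they do not constitute a different argument.
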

\begin{proof}
See Appendix~\ref{App2}.
\end{proof}

In the following, the equivalence between MLE-Poisson
and MLE-Multinomial is established. The solution space of the latter can be obtained by Proposition \ref{prop_convex} as the set of stationary points of the Lagrangian function, i.e., the derivative of the objective function in Eq. \eqref{eq:MLfinal_u} augmented by the scaled version of the equality constraint. The first-order optimality condition is therefore
\begin{equation}
\frac{\partial}{\partial u_j}
\left(\bm c^\mathsf{T}  \log{ \bm P  \bm u} - \lambda (\bm{1}^\mathsf{T}_J\bm{u} -C ) \right)  =0 \label{eq:optimality}
\end{equation}
which can be shown (see Appendix~\ref{App3})  to yield, for each $j$, 
\begin{equation}
    \sum_{i=1}^I c_i \frac{p_{ij}}{\bm{p}_i^\mathsf{T} \bm{u}} = 1
\label{eq:supstable}
\end{equation}
where $\bm{p}_i^\mathsf{T}$ denotes the $i$th row of $\bm{P}$. Eq. \eqref{eq:supstable} is identical to the convergence (fixed-point) stability condition of Eq. \eqref{eq:shepp}, i.e., $\hat{\bm u}^{m+1}=\hat{\bm u}^{m}$. More formally, we have the following result.
\begin{proposition}\label{prop_optimality}
The optimality condition of the MLE-Multinomial in Eq. \eqref{eq:optimality}  is identical to the convergence condition of the EM procedure based on the MLE-Poisson, i.e., the two methods share  the same solution space.
\end{proposition}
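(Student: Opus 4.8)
The plan is to compare the first-order (Karush--Kuhn--Tucker) optimality condition of the constrained problem \eqref{eq:MLfinal_u} with the fixed-point condition $\hat u_j^{m+1}=\hat u_j^m$ of the EM recursion \eqref{eq:shepp}, and show that the two coincide. Following Sec.~\ref{sec:remarksinitial}, I would work on the consolidated instance with the zero tiles removed upfront, so that any candidate solution has strictly positive entries; then the nonnegativity constraints are inactive and the only binding constraint is $\bm 1_J^\mathsf{T}\bm u=\myctot$.

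First I would form the Lagrangian $\bm c^\mathsf{T}\log(\bm P\bm u)-\lambda(\bm 1_J^\mathsf{T}\bm u-\myctot)$ and set its gradient to zero. Using $\partial\mathcal L/\partial u_j=\sum_i c_i\,p_{ij}/(\bm p_i^\mathsf{T}\bm u)$ from \eqref{eq:loglike0}, stationarity reads $\sum_{i}c_i\,p_{ij}/(\bm p_i^\mathsf{T}\bm u)=\lambda$ for every $j$. The key step is to pin down $\lambda$: multiplying this identity by $u_j$ and summing over $j$ turns the left-hand side into $\sum_i c_i\,(\bm p_i^\mathsf{T}\bm u)/(\bm p_i^\mathsf{T}\bm u)=\sum_i c_i=\myctot$ and the right-hand side into $\lambda\,\bm 1_J^\mathsf{T}\bm u=\lambda\myctot$, so $\lambda=1$. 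Only $\|\bm c\|_1=\myctot$ and the norm constraint are used here; column-stochasticity of $\bm P$ is what additionally makes the very same recursion arise from the Shepp--Vardi Poisson likelihood, whose stationarity is $\sum_i c_i\,p_{ij}/(\bm p_i^\mathsf{T}\bm u)=\sum_i p_{ij}=1$. Substituting $\lambda=1$ and multiplying through by $u_j$ gives
\[
u_j=u_j\sum_{i=1}^I c_i\,\frac{p_{ij}}{\sum_{k=1}^J p_{ik}u_k},
\]
which is exactly the fixed-point condition of \eqref{eq:shepp}.

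For the converse direction, any strictly positive fixed point of \eqref{eq:shepp} satisfies the stationarity identity with $\lambda=1$, hence is a stationary point of \eqref{eq:MLfinal_u}; by Proposition~\ref{prop_convex} every such stationary point is a global maximizer, so the full solution sets --- not just isolated points --- agree. Two bookkeeping items complete the argument, and I would spell them out: first, the $\ell_1$ normalization, since a maximizer of \eqref{eq:MLfinal_u} has $\|\bm u\|_1=\myctot$ by construction, while summing \eqref{eq:shepp} over $j$ (the numerator sum cancelling the denominator) shows the EM map preserves $\sum_j\hat u_j^m$, so starting from the rescaled prior $\bm a$ with $\|\bm a\|_1=\myctot$ keeps every iterate on the same simplex; second, the identification of the EM ``convergence condition'' with ``fixed point'', for which I would invoke the standard monotone-ascent property of EM and the fact, established in \cite{shepp77}, that its limit points are stationary points of the likelihood. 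The main obstacle is not any single hard estimate but precisely this bookkeeping together with the boundary case of vanishing components; the consolidation / zero-removal convention of Sec.~\ref{sec:remarksinitial} is what lets me stay in the relative interior and sidestep the KKT inequalities for inactive coordinates.
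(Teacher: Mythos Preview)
Your proposal is correct and follows essentially the same route as the paper: form the Lagrangian of \eqref{eq:MLfinal_u}, obtain the stationarity condition $\sum_i c_i\,p_{ij}/(\bm p_i^\mathsf{T}\bm u)=\lambda$, pin down $\lambda=1$ via the multiply-by-$u_j$-and-sum trick, and identify the result with the EM fixed-point condition of \eqref{eq:shepp}, invoking Proposition~\ref{prop_convex} to conclude that stationary points are global optima. Your additional bookkeeping (explicit converse, $\ell_1$-mass preservation, handling of zero coordinates via the convention of Sec.~\ref{sec:remarksinitial}) goes slightly beyond what the paper writes out but is consistent with it; the paper similarly footnotes the zero-initialization caveat rather than treating the boundary via full KKT.
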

\begin{proof}
See Appendix~\ref{App3}.
\end{proof}

Looking retrospectively, this equivalence could have been expected in the light of the close relationship between the Poisson and Multinomial distributions  \cite[Ch. V]{Taylorbook}: if $K$ independent random variables follow a Poisson distribution $x_k \sim \mathcal{P}(\rho_k)$, $k=1,\ldots,K$, then the distribution of the vector $\bm x \eqdef [x_1 \cdots x_K]^\mathsf{T}$ \emph{conditioned to the  sum} $\sum_{k=1}^K{x_k}=\bm{1}^\mathsf{T}_K \bm{x} = X$ is Multinomial $\mathcal{M}(X,\bm \mu)$ with $\mu_k=\frac{\rho_k}{\sum_{\iota=1}^K{\rho_\iota}}$.

As an additional result, we can easily recognize that the multiplicative factor appearing in Eq. \eqref{eq:shepp} equals the partial derivative of the objective function in  Eq. \eqref{eq:MLfinal_u}, i.e., 
\begin{equation}
    \frac{\partial   }{\partial u_j}  \bm c^\mathsf{T}  \log{ \bm P  \bm u}  = \sum_{i=1}^I{  c_i \frac{p_{ij} }{ \sum_{k=1}^J{ p_{ik} {u}_k }} }.
\label{eq:partialderiv}
\end{equation}
In other words, we can reinterpret the EM procedure in Eq. \eqref{eq:shepp} (derived for the Poisson generative model) as a purely multiplicative method to solve iteratively 
the optimization Eq. \eqref{eq:MLfinal_u} 
(derived for the Multinomial generative model)
starting from an initial guess\footnote{Such a purely multiplicative procedure in the form $u_j^{m+1} = u_j^{m} \cdot \psi_j^{m}$ is distinct from the standard gradient descent approach that, in general, involves an additive update in the form $u_j^{m+1} = u_j^{m} + \phi_j^{m}$.}.

\subsection{Maximum Likelihood Bounded Subspace (MLBS)}\label{sec:mlbs}

Recall that the vector $\bm c$ represents a \emph{single realization} of the random vector $\tilde{\bm c}$ with unknown mean $\overline{\bm c}$. Since no other measurements are available, in the absence of external information 
the  vector $\bm c$ represents a natural \emph{estimate of the  mean value $\overline{\bm c}$}. 
Replacing the unknown term $\overline{\bm c}$  with its  estimate $\bm c$ in the generative relation Eq. \eqref{eq:forward} leads to the constraint $\bm P \bm u = \bm c$. It is rather intuitive that, among all possible (non-negative) values of $\bm u$, those respecting this constraint, if they exist, are the ones that best conform with the available measurement $\bm c$ and with the model $\bm P$. This argument leads us to restrict the search for a ``good" estimate $\hat{\bm u}$ within the bounded subspace
\begin{equation}
   \mathcal{U} \eqdef \left\{ \bm u \; : \;  \bm P \bm u = \bm c , \, \bm u \geq \bm 0 \right\}. 
   \label{eq:U}
\end{equation}
Note that the constraint $\bm P  \bm u = \bm c$ absorbs the condition  $\bm{1}^\mathsf{T}_J\bm{u} =\myctot$ on the total count.
Since in our application $J\gg I$,  
matrix $\bm P$ is rank deficient  and the constraint $\bm P \bm u = \bm c$ admits multiple solutions in the variable $\bm u$, although it is not guaranteed that all the components of the latter are non-negative, i.e., MLBS may be empty. 
For non-empty MLBS, while we have somewhat restricted the search space, we still need to provide a criterion for selecting a single solution within that space unambiguously. This aspect is elaborated later in Section \ref{sec:ourestimator}. 

Though this way of reasoning is  heuristic, it turns out that the bounded subspace defined by Eq. \eqref{eq:U} is intimately connected to the MLE-Multinomial and MLE-Poisson procedures. 
In fact, the condition $\bm{P}\bm{u} = \bm{c}$ can be rewritten as  $\bm{p}_i^\mathsf{T} \bm{u} = c_i$, $\forall i$, implying that the optimality condition  Eq. 
\eqref{eq:supstable} is always verified. In particular, the following result holds true.
\begin{proposition}\label{prop3}
If non-empty, the bounded subspace $\mathcal{U}$ defined in Eq.  \eqref{eq:U} represents a set of  solutions for both MLE procedures, i.e., any point in $\mathcal{U}$ is also a solution of MLE-Multinomial and MLE-Poisson.
\end{proposition}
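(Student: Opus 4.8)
The natural plan is to reuse the optimality/convergence condition \eqref{eq:supstable}--\eqref{eq:sheppstable} that, by the preceding proposition, characterizes the solution set shared by MLE-Multinomial and MLE-Poisson, and simply check that every $\bm u \in \mathcal{U}$ satisfies it. Membership $\bm u \in \mathcal{U}$ means exactly $\bm p_i^\mathsf{T}\bm u = c_i$ for all $i$ together with $\bm u \ge \bm 0$; substituting $\bm p_i^\mathsf{T}\bm u = c_i$ into the left-hand side of \eqref{eq:supstable} collapses each term $c_i\, p_{ij}/(\bm p_i^\mathsf{T}\bm u)$ to $p_{ij}$, and $\sum_{i=1}^I p_{ij} = 1$ then follows from the column-stochasticity $\bm 1_I^\mathsf{T}\bm P = \bm 1_J^\mathsf{T}$. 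Since, by Proposition \ref{prop_convex}, the likelihood-maximization problem is convex and all its stationary (KKT) points are global maxima, this identifies $\bm u$ as a solution of MLE-Multinomial, hence --- via the equivalence of Sec. \ref{sec:equivalenceMLE} --- of MLE-Poisson as well.

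The step that requires care, and which I expect to be the only real obstacle, is the bookkeeping of indices $i$ with $c_i = 0$: such terms are absent from the log-likelihood \eqref{eq:loglike0} and from its gradient \eqref{eq:partialderiv} (they carry the factor $c_i = 0$), so the sums in the stationarity condition effectively run over $\{i : c_i > 0\}$, and the naive cancellation above only yields $\sum_{i : c_i>0} p_{ij}$. To upgrade this to $\sum_{i=1}^I p_{ij} = 1$ I would again invoke non-negativity: for $\bm u \in \mathcal{U}$ and any $i$ with $c_i = 0$, $\sum_j p_{ij}u_j = \bm p_i^\mathsf{T}\bm u = 0$ with non-negative summands forces $p_{ij}u_j = 0$ for every $j$; hence at every tile $j$ with $u_j > 0$ all omitted terms have $p_{ij} = 0$ and the full column sum is recovered, while at tiles with $u_j = 0$ one only needs the KKT inequality $\sum_{i:c_i>0}p_{ij} \le 1$, which holds a fortiori.

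A cleaner alternative, which bypasses the boundary/KKT discussion entirely, is to argue by comparing objective values. Setting $\bar c_i = \bm p_i^\mathsf{T}\bm u$, column-stochasticity gives $\sum_{i=1}^I \bar c_i = \bm 1_I^\mathsf{T}\bm P\bm u = \|\bm u\|_1 = \myctot = \sum_{i=1}^I c_i$ for every feasible $\bm u$; therefore the elementary inequality $\log t \le t-1$ (Gibbs/log-sum) gives $\mathcal{L}(\bm u; \bm c, \bm P) = \sum_i c_i\log\bar c_i \le \sum_i c_i\log c_i$ (with the convention $0\log 0 = 0$), with equality precisely when $\bar{\bm c} = \bm c$, i.e. when $\bm u \in \mathcal{U}$. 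Thus every element of $\mathcal{U}$ attains the value $\sum_i c_i\log c_i$, which the inequality shows to be an upper bound for $\mathcal{L}$ over the whole feasible set, so each such $\bm u$ is a maximizer --- a solution of MLE-Multinomial and, by Sec. \ref{sec:equivalenceMLE}, of MLE-Poisson. If $\mathcal{U} = \emptyset$, which can happen since $\bm c$ is a single noisy realization and $\bm P\bm u = \bm c$ need not be feasible, the statement is vacuous; when $\mathcal{U}\neq\emptyset$ this argument in fact shows that the full solution set equals $\mathcal{U}$.
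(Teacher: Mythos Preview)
Your first approach---substituting $\bm p_i^\mathsf{T}\bm u = c_i$ into the stationarity condition \eqref{eq:supstable} and invoking column-stochasticity---is exactly the paper's proof; the paper does not address the $c_i=0$ bookkeeping you flag, so your treatment is in fact more careful on that point.

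Your second approach, via the Gibbs/log-sum inequality, is a genuinely different and somewhat cleaner route. The paper works at the level of first-order conditions (showing the gradient condition is satisfied on $\mathcal{U}$), whereas you work at the level of objective values (showing $\sum_i c_i\log c_i$ is a global upper bound attained exactly on $\mathcal{U}$). Your argument buys two things the paper's does not: it avoids any KKT/boundary case analysis entirely, and it yields the stronger conclusion that when $\mathcal{U}\neq\emptyset$ the MLE solution set \emph{equals} $\mathcal{U}$ (the paper only proves the inclusion $\mathcal{U}\subseteq$ solution set). The paper's approach, on the other hand, stays closer to the machinery already developed in Sec.~\ref{sec:equivalenceMLE} and is shorter if one is willing to gloss over the $c_i=0$ indices.
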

\begin{proof}
See Appendix~\ref{App4}.
\end{proof}
Following Proposition \ref{prop3}, we shall refer to $\mathcal{U}$ as the Maximum Likelihood Bounded Subspace (MLBS).

\subsection{Insights about the SB  estimator}\label{sec:sb}

\begin{proposition}\label{prop1}
The simple Bayes-rule estimator in Eq. \eqref{eq:SBelement} belongs to the linear-type subclass of Eq. \eqref{eq:backward}; in fact, it can be  rewritten in vector form as
\begin{equation}
    \hat{\bm{u}}_\text{SB}  =  \bm Q  \bm{c}, \quad 
    \bm Q_{[J \times I]} \eqdef \mathrm{diag}(\bm{\alpha}) \bm{P}^\mathsf{T} \mathrm{diag}^{-1}(\bm{P}\bm{\alpha}) 
        \label{eq:SBvector}
\end{equation}
where  $\mathrm{diag}(\bm{v})$ is a diagonal matrix containing the entries of $\bm{v}$  (and $\mathrm{diag}^{-1}(\bm{v})$ their reciprocals). 
An alternative form is
\begin{equation}
 \hat{\bm{u}}_\text{SB} =  \bm{\alpha} \odot \bm{P}^T (\bm{c} \oslash \bm{P} \bm{\alpha} ) \label{eq:SB_vec}
\end{equation}
where $\odot$ denotes the element-wise (Hadamard) product between two vectors, and likewise $\oslash$ denotes the element-wise division between two vectors.
\end{proposition}
\begin{proof}
Notice that $\sum_{k=1}^J p_{ik} \alpha_k$ and  $\sum_{i=1}^I \xi_i  p_{ij}$ are the $i$th and $j$th element of the vectors $\bm{P}\bm{\alpha} $ and $\bm{P}^\mathsf{T} \bm{\xi}$, respectively, where $\xi_i = c_i/\sum_{k=1}^J p_{ik} \alpha_k$. The thesis follows in a straightforward manner by rewriting in vector form $\bm{\xi} = \mathrm{diag}^{-1}(\bm{P}\bm{\alpha}) \bm{c}$ and exploiting the associative property of the matrix product.
\end{proof}

Proposition \ref{prop1} highlights that, owing to its linearity, the SB estimator is simple to compute and, compared to MLE, does not involve an iterative procedure.\\

It should be noted that SB may fall outside the MLBS (a numerical example is given below in Subsection \ref{propSBfirstiteration}) and therefore \emph{SB does not qualify in general as a ML solution}.

\begin{proposition}\label{propSBfirstiteration}
The SB solution coincides with the first point of the EM sequence after the first iterative step when the starting point is set to the prior vector $\bm \alpha$.  
In other words, the SB point lies  between the prior vector $\bm \alpha$ and the MLBS. 
\end{proposition}

\begin{proof}
It can be immediately verified by
comparing Eq. \eqref{eq:SBelement}  and Eq. \eqref{eq:shepp}, with   $\hat{\bm u}^m_j$ in  the latter replacing  $\bm \alpha_j$ in the former. 
\end{proof}

\begin{figure}[tb!]
\centering
\subfigure[Toy scenario]{\includegraphics[width=0.6\linewidth]{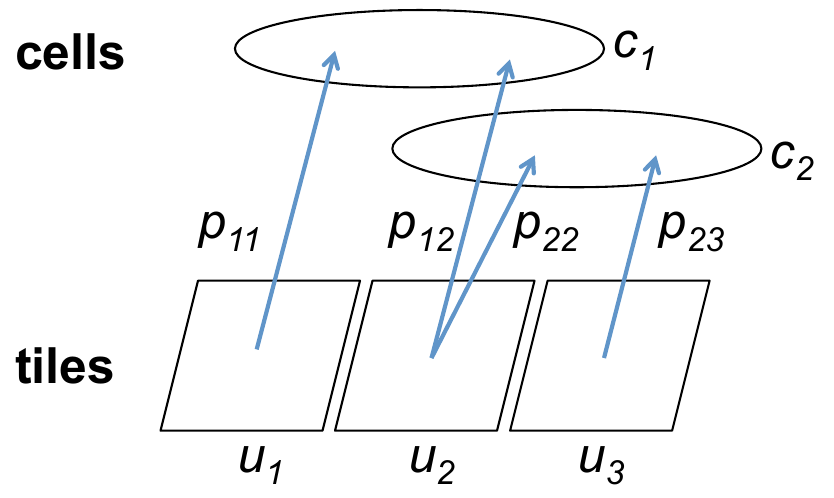}\label{fig:toyscenario}}
\subfigure[Log-likelihood surface]{\includegraphics[width=0.99\linewidth]{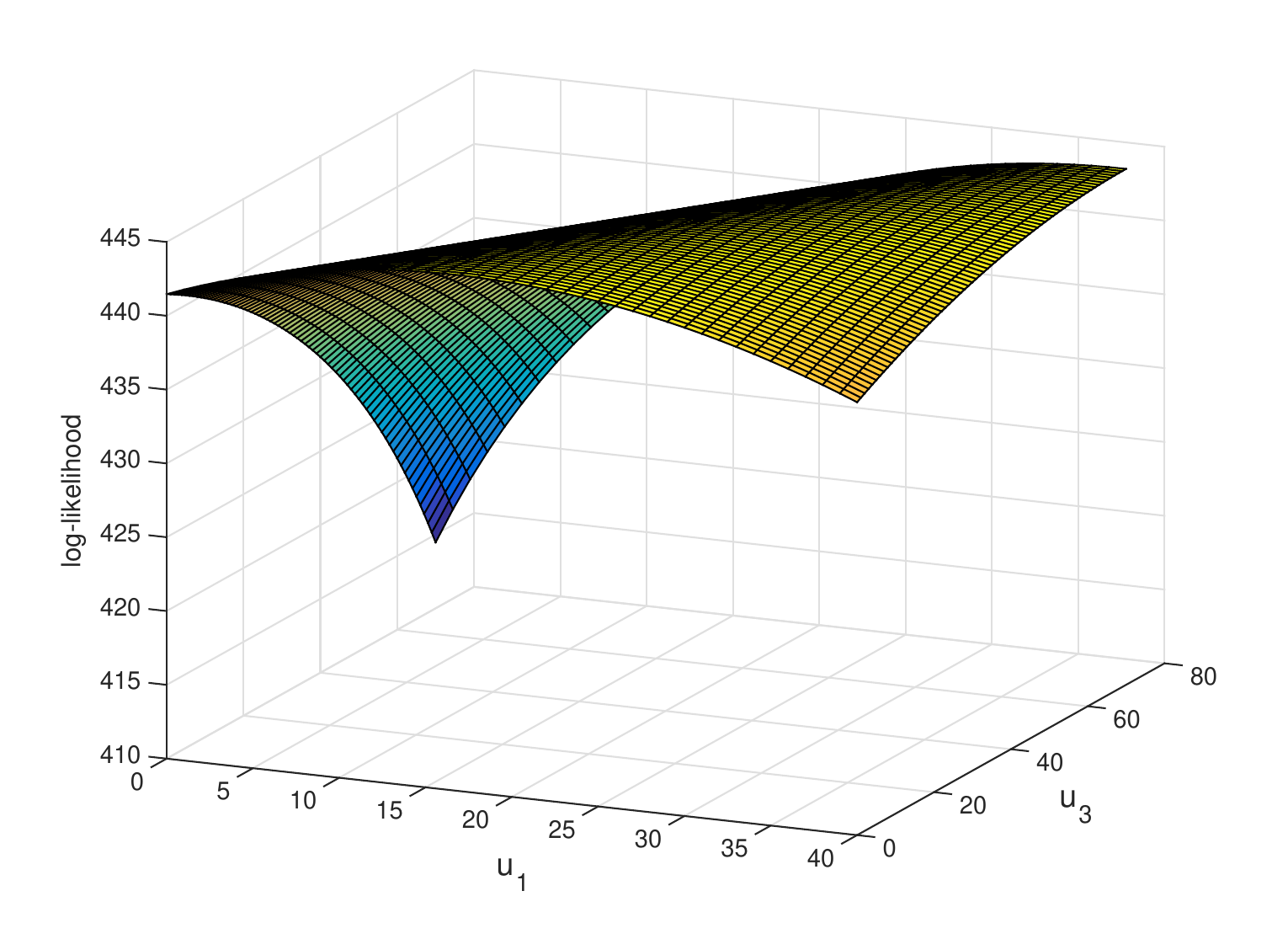}\label{fig:loglikesurface}}
\subfigure[Solution points]{\includegraphics[width=0.99\linewidth]{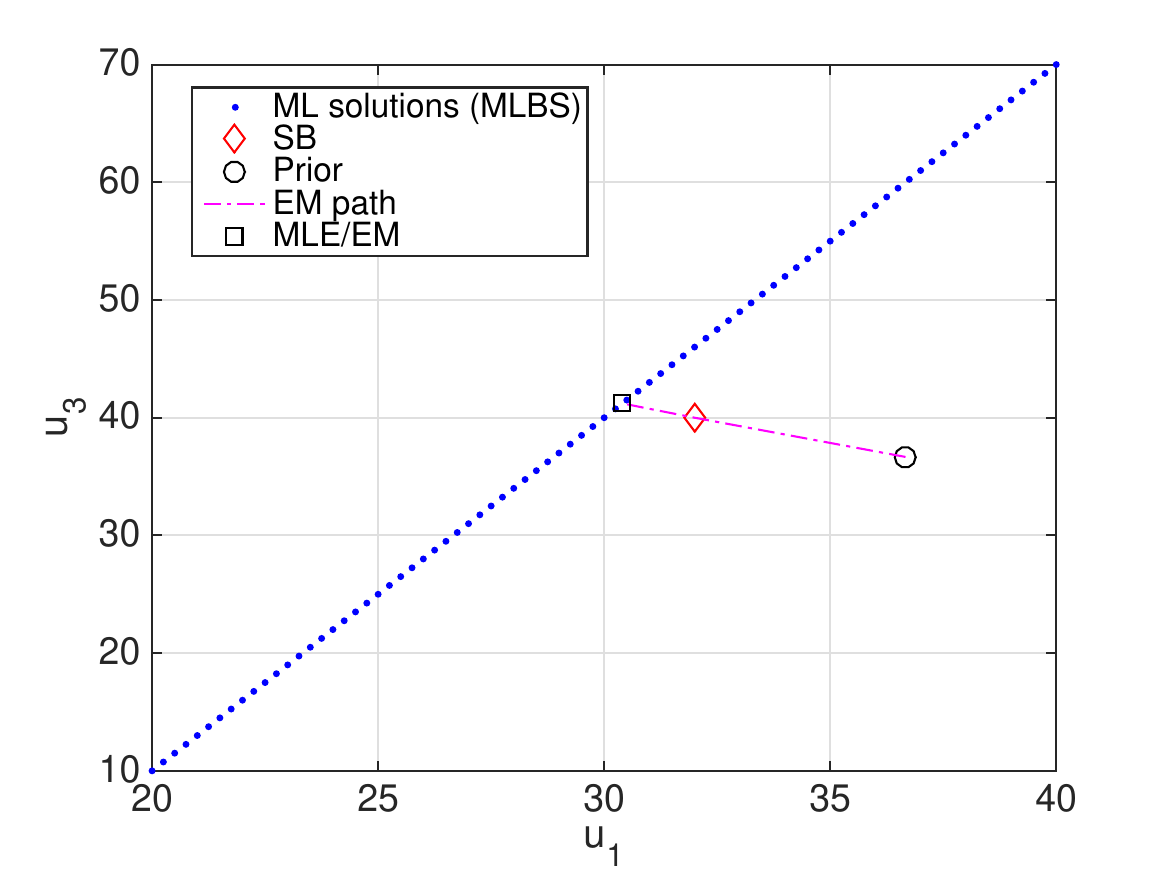}\label{fig:toyscenarioplanar}}
\caption{Log-likelihood surface and solution points for a toy scenario with $J=3$ tiles, $I=2$ cells, $C=110$ phones, count vector $\bm c = [40 \; 70]^\mathsf{T}$
and emission matrix $\bm P = 
\begin{bmatrix}
    1 & .25 & 0 \\
    0 & .75 & 1 \\
\end{bmatrix}
$.}
\label{fig:toyall}
\end{figure}

\subsection{Graphical illustration for a toy scenario}
\label{sec:toy}

To illustrate the above results graphically, we present  numerical results referring to a simple toy scenario where  $\myctot = 110$ mobile phones are split across $J=3$ tiles and connect to $I=2$ overlapping cells.
The observation count vector is $\bm c = [40 \; 70]^\mathsf{T}$
and the model matrix $\bm P = 
\begin{bmatrix}
    1 & .25 & 0 \\
    0 & .75 & 1 \\
\end{bmatrix}
$. In this toy instance there are only two degrees of freedom since the three variables $u_1$, $u_2$ and $u_3$ are constrained by the total sum  $\sum_{j=1}^3{u_j}=\myctot$, or equivalently $\bm{1}^\mathsf{T}_J\bm u  = \bm{1}^\mathsf{T}_I \bm c $. Therefore, we can map all admissible solutions $\bm u$ to an horizontal plane.
In the 3D plot of Fig. \ref{fig:loglikesurface} the horizontal dimensions represent $u_1$ and $u_3$, while the vertical dimension reports the value of the likelihood function in Eq.  \eqref{eq:loglike0}. The convexity of the  likelihood function (proven in Proposition \ref{prop_convex}) is graphically evident.
Notably, the loci of all global maxima of the maximum function lie on a segment, that is the 1D MLBS of the whole 2D solution space. 
In the planar plot of   Fig. \ref{fig:toyscenarioplanar}  the MLBS is marked by blue dots. The flat initial guess (for which $u_1=u_2=u_3=36.7$), the corresponding SB estimate  from Eq.  \eqref{eq:SBelement} and the EM solution from Eq.  \eqref{eq:shepp} are indicated by markers.

\section{Design of a Novel Estimator}\label{sec:map}

\subsection{Rationale}

In this section we derive a novel estimation approach that does not require an iterative procedure and can be solved in closed-form. The derivation starts from formulating the Maximum A Posteriori (MAP) estimator for the problem at hand, which is a typical approach to obtain point estimates  combining data and prior information. 
To the best of our knowledge, this is a novel contribution not presented earlier in previous literature.
We find however that the exact MAP estimator is impractical for the problem at hand, since the high computational complexity prevents its adoption in large problem instances. This motivates our effort to devise a novel alternative estimator. 

Generally speaking, the MAP estimator combines  information \emph{from the data}, captured by the likelihood term embedding  the measurement vector $\bm c$, with  information \emph{before the data}, captured by the prior probability distribution $\mathbb{P}(\bm{u}; \bm{\alpha})$ embedding the prior vector $\bm \alpha$ (or equivalently  $ \bm{a} = C \bm{\alpha}$), as discussed earlier in  Section \ref{sec:remarksinitial}. In fact, the posterior probability distribution
is generally composed of the product of the two components, namely  $\text{posterior} = \text{likelihood} \times \text{prior}$. The resulting solution is a trade-off, i.e., a compromise  between the two sources of information.
Notice that also the SB estimator introduced in Section \ref{sec:bayest}  aims at compromising between prior information and data, according to a Bayesian rationale. However, as discussed below, the MAP criterion makes a further step through the maximization of the posterior, which in fact leads to a more sophisticated estimator involving a computationally-intensive numerical optimization, while the SB is a simple (linear) estimator available in closed-form.

\subsection{The MAP estimator}\label{sec:MAP}

We derive the MAP estimator based on the hierarchical generative model with Multinomial distribution shown in Fig. \ref{fig:hierarchicalmultinomial}.  
The result is summarized in the following Proposition.

\begin{proposition}\label{prop_MAP}
The MAP estimator of $\bm{u}$ can be obtained as
\begin{equation}
\begin{split}
\hat{\bm u}_\text{MAP} & =  \argmax_{ \substack{ \bm{1}^\mathsf{T}_J \bm u  = \myctot \\  \bm u  \geq \bm 0} }{ \big\{ \bm c^\mathsf{T} \log{ \bm P  \bm u}  + \bm{u}^\mathsf{T}\log{\bm{\alpha}} -  \sum_{j=1}^J{\log{u_j!}}  \big\} } 
\end{split}
 \label{eq:MAPdevel}
\end{equation}
\end{proposition}
\begin{proof}
See Appendix~\ref{App5}.
\end{proof}

 In Eq. \eqref{eq:MAPdevel} the term $\bm c^\mathsf{T}  \log{ \bm P  \bm u}$ carries the data information from the measurements $\bm c$ while the other two terms carry information from the prior probability distribution $\mathbb{P}(\bm{u}; \bm{\alpha})$. More specifically, the term  $\bm{u}^\mathsf{T}\log{\bm{\alpha}}$ carries information from the prior vector $\bm \alpha$ (or equivalently  $ \bm{a} = C \bm{\alpha}$) which contains the parameters of the prior probability distribution (ref. \ref{sec:remarksinitial}), while 
 the term  $\sum_{j=1}^J{\log{u_j!}} $ 
captures the \emph{combinatorial diffusion} effect that is intrinsic to the adoption of a Multinomial distribution, and induces a preference for solutions with higher entropy (it can be easily shown that this term is maximized when all elements of $\bm u$ are equal). 

The discrete factorial term appearing in Eq. \eqref{eq:MAPdevel} could be replaced by its analytical continuation, i.e., by the Gamma function $\Gamma(n+1)=n! $, leading to an equivalent continuous 
function that, in principle, can be minimized numerically via standard numerical methods (e.g., gradient descent). 
However, for very large problem instances, with $I$ and $J$ in the range of tens of thousands,  numerical resolution with general purpose solvers might still be too impractical, motivating the derivation of a computationally simpler alternative.

In Section \ref{sec:ourestimator} we will present a novel alternative estimator,  labelled DF for ``Data First", 
that exploits the particular structure of the problem at hand and yields a closed-form analytical solution. In order to provide additional insight, we will also elaborate on the relation and conceptual difference between the new DF estimator and the classical MAP estimator derived above. To this aim, it is convenient to derive   an approximated version of the MAP estimator. The approximation relies on the  multivariate normal approximation of the (prior) multinomial distribution, yielding the following result (details can be found in  Appendix~\ref{App6}).

\begin{proposition}\label{prop5}
The MAP estimator in Eq. \eqref{eq:MAPdevel} can be approximated by solving the following simpler optimization problem:
\begin{equation}
\hat{\bm u}_\text{A.M.} = \argmax_{ \substack{ \bm{1}^\mathsf{T}_J \bm u  = \myctot \\  \bm u  \geq \bm 0} }{ \big\{ \bm c^\mathsf{T}  \log{ \bm P  \bm u}  -\frac{1}{2} \| \bm u - \bm a \|^2_{\bm a} \big\} }  
 \label{eq:MAP2normal2}
\end{equation}
where $\| \bm u - \bm a \|^2_{\bm a} \eqdef (\bm u - \bm a)^\mathsf{T} \bm A^{-1}(\bm u - \bm a) = \sum_{j=1}^J\frac{( u_j - a_j)^2}{a_j}
$, with $\bm{A}=\mathrm{diag}(\bm{a})$, is the \emph{weighted} $\ell^2$-norm of the difference vector $\bm u - \bm a$.
\end{proposition}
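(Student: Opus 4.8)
The approach I would take is to leave the data term $\bm c^\mathsf{T}\log\bm P\bm u$ of \eqref{eq:MAPdevel} untouched and approximate only the prior-information part $\bm u^\mathsf{T}\log\bm\alpha - \sum_{j=1}^J\log u_j!$ (equivalently, with $\bm a$ in place of $\bm\alpha$, by the invariance property already invoked in the proof of the exact MAP). First I would apply Stirling's approximation $\log u_j! = u_j\log u_j - u_j + \tfrac12\log(2\pi u_j) + O(1/u_j)$, which is legitimate here because the a-priori-empty tiles have been dropped upfront (cf. Sec.~\ref{sec:nozerosinitialguess}) so that no $u_j$ is forced to $0$. Substituting this and writing $\alpha_j = a_j/\myctot$, and using that on the feasible set $\|\bm u\|_1 = \myctot$ the quantities $\sum_j u_j$ and $\myctot\log\myctot$ are fixed, one obtains up to an additive constant in $\bm u$
\[
\bm u^\mathsf{T}\log\bm\alpha - \sum_{j=1}^J\log u_j! \;=\; -\sum_{j=1}^J u_j\log\frac{u_j}{a_j} \;-\; \tfrac12\sum_{j=1}^J\log(2\pi u_j) \;+\; O\!\Big(\sum_{j}\tfrac1{u_j}\Big).
\]

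Next I would Taylor-expand the leading term, the generalized Kullback--Leibler discrepancy $\phi(\bm u)\eqdef\sum_j u_j\log(u_j/a_j)$, to second order about its natural centre $\bm u=\bm a$. Since $\phi(\bm a)=0$, $\partial\phi/\partial u_j|_{\bm a}=1$, and the Hessian at $\bm a$ is diagonal with entries $1/a_j$, this yields $\phi(\bm u)\approx\sum_j(u_j-a_j)+\tfrac12\sum_j(u_j-a_j)^2/a_j$. The linear term $\sum_j(u_j-a_j)=\|\bm u\|_1-\|\bm a\|_1=0$ vanishes identically on the feasible set, leaving $\phi(\bm u)\approx\tfrac12\|\bm u-\bm a\|^2_{\bm a}$ in the notation of the statement. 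Discarding the $O(\log u_j)$ Stirling correction and the cubic-and-higher Taylor remainder — both lower order than the retained quadratic term when the counts are large (so that $a_j=\myctot\alpha_j$ grows and $\bm u$ stays near $\bm a$) — and re-attaching the untouched data term gives precisely \eqref{eq:MAP2normal2}.

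The main obstacle is not the algebra but the bookkeeping of what may legitimately be dropped: one must check that all $\bm u$-independent quantities (including $\log\myctot!$ and $\myctot\log\myctot$) and the constraint-fixed sum $\sum_j u_j$ disappear from the $\argmax$, and that the neglected $-\tfrac12\sum_j\log(2\pi u_j)$ and higher-order Taylor terms are genuinely subdominant — i.e.\ \eqref{eq:MAP2normal2} is an asymptotic (large-$\myctot$, $\bm u$ near $\bm a$) approximation, not an identity. As a consistency check it is worth noting that $\|\bm u-\bm a\|^2_{\bm a}=\sum_j(u_j-a_j)^2/a_j$ is exactly Pearson's $\chi^2$ discrepancy between $\bm u$ and the prior mean $\bm a$, which is the exponent of the Gaussian (local CLT) approximation to the Multinomial $\mathcal{M}(\myctot,\bm\alpha)$ restricted to the simplex $\|\bm u\|_1=\myctot$; an alternative, equivalent derivation would replace $\mathbb{P}(\bm u)$ directly by that Gaussian surrogate and reach \eqref{eq:MAP2normal2} in one step.
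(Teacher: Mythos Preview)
Your derivation is correct, but it follows a genuinely different route from the paper. The paper does not touch the factorial term via Stirling at all: instead it replaces the Multinomial prior $\mathcal{M}(C,\bm\alpha)$ directly by its multivariate normal approximation $\mathcal{N}(\bm a,\bm\Sigma_{\bm a})$ (with the exact Multinomial covariance $\sigma_{mn}^2 = C\alpha_m(\delta_{mn}-\alpha_n)$), plugs this into the MAP objective, and then invokes the application-specific assumption $J\gg 1\Rightarrow\alpha_j\ll 1$ to drop the off-diagonal covariance entries and approximate the diagonal ones by $C\alpha_m(1-\alpha_m)\approx a_m$, so that $\bm\Sigma_{\bm a}^{-1}\approx\bm A^{-1}$. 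That is precisely the ``Gaussian surrogate'' shortcut you mention only in passing at the very end.

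The trade-off is this: the paper's route is shorter and makes the dependence on the modelling assumption $\alpha_j\ll 1$ explicit, which is the physically relevant regime here (many tiles, each with small prior mass). Your Stirling-plus-Taylor route is more elementary --- it needs neither the Multinomial covariance nor a CLT statement --- and it gives a cleaner audit trail of what is discarded (the $\tfrac12\sum_j\log u_j$ Stirling correction and the cubic Taylor remainder), tying the approximation instead to large per-tile counts and $\bm u$ close to $\bm a$. Your observation that the resulting penalty is Pearson's $\chi^2$ between $\bm u$ and $\bm a$ is a nice bonus that the paper does not make explicit.
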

\begin{proof}
See Appendix~\ref{App6}.
\end{proof}

\subsection{A ``Data First" approach}
\label{sec:ourestimator}

In  Section \ref{sec:newresults} we have established that all points within a non-empty MLBS 
attain the same maximum value of the likelihood function.
In other words, the estimation problem is  structurally non-identifiable,  following the definition given in \cite{raue13}, in the sense  that it is not possible to pick a unique solution 
based solely on the information contained \emph{in the data} $\bm c$. As all points within the MLBS conform to the data equally well. In order to select a particular point within the MLBS, or at least restrict to a subset of preferred solutions, 
we must necessarily resort to some additional assumption and/or to external information. 

One possibility is to demand that, in addition to conforming to the measured data, the solution shall conform also to some particular structural property, characterizing the physical process we aim to measure, e.g., in the form of a smoothness criterion, minimum gradient between adjacent tiles, etc. A second approach is to resort to some kind of \emph{prior information} to aid the solution selection process.
Both approaches amount to adding an additional component to the objective function to be minimized, and therefore may be interpreted as different forms of regularization. 
For our specific application, dealing with human distribution in space, there are no obvious structural constraints tied to the physics of the underlying process, and it appears more natural to encode external information in the form of prior information (as done, e.g., in \cite{benjamin2018}). 

In principle, a possible way to account for prior information (however determined) is through the MAP estimator  derived in the previous section. As discussed, the MAP solution strikes a balance between the information \emph{from the data}, captured by the likelihood term embedding  the measurement vector $\bm c$, with  information \emph{before the data}, captured by the prior distribution embedding the ``prior vector'' $\bm \alpha$ (or equivalently $\bm a$).  Likewise for other Bayesian estimators,  as more data (more samples) are available  the solution component driven by the data increases its relative importance and eventually dominates over the prior information. On the opposite direction,  when the data are scarce, the prior information may dominate. 
In the extreme situation that \emph{only a single  sample measurement is available}, as is specifically the case at hand in our application, adopting a MAP estimator involves a certain risk of diminishing the relative weight of the measurement information to the point that it almost vanishes. In other words, as with MAP (exact or approximated) we cannot control explicitly how much weight to put on the data vis-à-vis the prior, there is a certain risk of  ending up with a solution that reflects mostly the prior information, only slightly perturbed by the measurements. 

To avoid this undesirable effect, in the following we develop an alternative heuristic procedure for the problem at hand based on the ``Data First" principle, where in-data information (likelihood) is given priority over other off-data knowledge (prior distribution and/or any other structural property).

The goal expressed above is achieved though an estimator built according to the following structure:
\begin{equation}
    \hat{\bm u} = \argmin_{\bm u \in \mathcal{U}}{f\big(\bm u, \bm a \big)} 
    \label{eq:mymlestruct}
\end{equation}
with $f(\bm u, \bm{a})$ denoting a distance function between vectors $\bm u$ and $\bm a$.
With this structure, \emph{the ML property is imposed as a hard constraint}. In fact, the condition
\begin{equation}
\bm u \in \mathcal{U} \Leftrightarrow \big\{  \bm P \bm u = \bm c, \; \bm u \geq \bm 0 \big\} 
\label{eq:c2}
\end{equation}
forces
the solution to lie within the MLBS that is by definition the locus of all ML points, i.e., the points that \emph{best conform to the measured data} $\bm c$. 
 Among these points, then we select the point that additionally \emph{best conforms with the off-data knowledge} encoded in the prior vector $\bm a$, through the minimization of some distance function $ f\big(\bm u, \bm a \big)$. Among many possible choices for the distance function,  opting for the  weighted $\ell^2$-norm defined in Proposition \ref{prop5}, i.e.,
 \begin{equation}
 f\big(\bm u, \bm a \big) =  \| \bm u - \bm a \|^2_{\bm a} \eqdef
 (\bm u - \bm a)^\mathsf{T} \bm A^{-1}(\bm u - \bm a)
 \label{eq:f2}
\end{equation}
with $\bm{A} \eqdef \mathrm{diag}(\bm{a})$, allows us to establish a direct connection between the newly proposed estimator and the approximate MAP form derived in Eq.  \eqref{eq:MAP2normal2}, as elaborated below.
Plugging Eq. \eqref{eq:c2} and Eq. \eqref{eq:f2} into the structure
leads to the following novel estimator
\begin{equation}
    \hat{\bm u} =  \argmin_{  \substack{ \bm P \bm u = \bm c \\ \bm u \geq \bm 0}}{ (\bm u \!-\! \bm a)^\mathsf{T} \bm A^{-1}(\bm u \!-\! \bm a)}.
    \label{eq:mymle}
\end{equation}
The solution defined by Eq. \eqref{eq:mymle} is not available in closed-form and may not even exist if MLBS is empty. However, it provides the basis for an heuristic estimator $\hat{\bm{u}}_\text{DF}$,  labeled DF for ``Data First", based on a suitable relaxation of \eqref{eq:mymle}. Remarkably, this estimate can be always computed conveniently in closed-form, even in case of empty MLBS.  

\begin{proposition}\label{prop6} 
A solution for a relaxation of  \eqref{eq:mymle} is given by
 \begin{equation}
    \hat{u}_{j,\text{DF}} =  \check{u}_j \, \sum_{i=1}^I  c_i  \frac{ p_{ij}  }{\sum_{k=1}^J  p_{ik}  \, \check{u}_k } ,\quad j=1,\ldots, J 
\end{equation}
which can be rewritten as
\begin{equation}
     \hat{\bm{u}}_\text{DF} = \check{\bm{Q}} \bm{c}, \qquad 
    \check{\bm{Q}}  \eqdef \mathrm{diag}(\check{\bm{u}}) \bm{P}^\mathsf{T} \mathrm{diag}^{-1} (\bm{P}\check{\bm{u}} ) \label{eq:DF}
\end{equation}
or, alternatively, as
\begin{equation}
 \hat{\bm{u}}_\text{DF} =  \check{\bm{u}} \odot \bm{P}^T (\bm{c} \oslash \bm{P}\check{\bm{u}} )  \label{eq:DFclosed}
\end{equation}
where 
\begin{equation}
  \check{\bm u} \eqdef \max( \bm{A} \bm{P}^\mathsf{T} (\bm{P} \bm{A}\bm{P}^\mathsf{T})^{-1} (\bm{c}-\bm{P}\bm{a})+\bm{a} , \bm{0}) \label{eq:check_u}
    \end{equation}
(maximum intended element-wise), with $\odot$ and  $\oslash$ defined as in Proposition \ref{prop1}.
\end{proposition}
\begin{proof}
See Appendix~\ref{App7}.
\end{proof}
 
 The idea of DF stems from the fact that the structure of the solution to Eq. \eqref{eq:mymle} can be relaxed to make it independent of the Lagrange multipliers, thanks to the sparsity of $\bm{P}$ (details in the proof). By doing so we obtain the intermediate point $\check{\bm u}$ given in Eq. \eqref{eq:check_u}. Such a point, however,  may violate the  total mass constraint  $\bm{1}^\mathsf{T}_J \bm{u}=C$ due to clipping the negative values to satisfy the non-negativity constraint on $\bm{u}$.
In a second step, we obtain $\hat{\bm{u}}_\text{DF}$ by applying  to  $\check{\bm u}$ the transformation in Eq. \eqref{eq:shepp}, which we had already encountered in the EM procedure, which has the effect of  redistributing the mass between non-zero elements of the input vector so as to guarantee the fulfilment of the total mass constraint by the output vector, i.e., $\bm{1}^\mathsf{T}_J \hat{\bm{u}}_\text{DF}=C$.

Notice that, while   MLE/EM is computed through multiple iterations, the DF estimator can be computed directly.
A careful look at Eq. \eqref{eq:check_u} reveals that the heaviest computation part depends only on the model matrix $\bm P$ and on the  prior $\bm a$. In fact, setting $\bm F \eqdef \bm{A} \bm{P}^\mathsf{T} (\bm{P} \bm{A}\bm{P}^\mathsf{T})^{-1}$ and $\bm g \eqdef (\bm{F}\bm{P}- \bm{I}_J)\bm{a}$, Eq. \eqref{eq:check_u} can be rewritten as
\begin{equation}
  \check{\bm u} = \max( \bm F \bm{c} -\bm{g} , \bm{0})
\end{equation}
where the maximum function is intended element-wise.   
Notice that $\bm{F}$ can be efficiently computed  as $\bm{F} = \bm{A}^{\frac{1}{2}} (\bm{A}^{\frac{1}{2}} \bm{P}^\mathsf{T})^\dag$, where $(\cdot)^\dag$ denotes the Moore-Penrose pseudoinverse of the matrix argument.
Moreover, both matrix $\bm F$ and vector $\bm g$ are independent of the measured data $\bm c$. This turns out to be useful in scenarios where multiple estimates must be computed with different measurement vectors but for the same model matrix $\bm P$, corresponding to real-world situations where the radio network coverage pattern, hence cell footprints, may be assumed to remain unchanged between measurement times while mobile phones may have moved, since in such cases $\bm F$ and $\bm g$ must be computed only once.

\section{Numerical results}\label{sec:numres}
In this section we present numerical results comparing the performance of the different estimators for a sample synthetic scenario. Testing on synthetic data has two important benefits: \emph{(i)} it allows one to control explicitly the data generation parameters and therefore to assess the sensitivity of results to said parameters, and \emph{(ii)} it allows one to quantify the absolute estimation error against a known  ``ground truth".

All the numerical results reported in this section where produced by a set of programs developed in R language. The whole code is made available open-source in the form of an online notebook\footnote{The open-source notebook is publicly available from \url{https://r-ramljak.github.io/MNO_mobdensity}} in order to  enable independent replication of results and reuse of all implemented functions in follow-up work by other researchers.

\subsection{Scenario}
We briefly describe here the data generation scenario, referring the interested reader to the open-source notebook referred above for any further details. 
The reference area covers a total of 
1,600 square kilometers and is divided into a regular square grid of 400 $\times$ 400 = 160,000 tiles of size  100 m $\times$ 100 m each. 

The Ground Truth Population (GTP) of mobile phones was generated based on publicly available official census data for the city of Munich, Germany, and its immediate surroundings\footnote{The census data are publicly available from \url{https://www.zensus2011.de/DE/Home/Aktuelles/DemografischeGrunddaten.html}}.  
The total population was reduced by a factor of $1/3$ to mimic the mobile customer basis of a single MNO with that market share.
The resulting GTP distribution is  shown graphically in Fig. \ref{fig:truepop}. 

The radio network topology was generated synthetically based on the \texttt{mobloc} tool \cite{mobloc} developed by M. Tennekes and used already in other studies \cite{cbsnetmob2019,cbs2019,salgado21}. 
The network generation model is completely parametric and its modular implementation allows to conduct empirical analysis of sensitivity to scenario parameters in future follow-up work. 

The  radio network considered in this study was designed in order to mimic the multi-layer nature of real-world  mobile networks \cite{Mishrabook}. In fact, radio access networks are typically deployed in an incremental way, with a first layer of large cells (also called ``umbrella cells" or macro cells) to ensure total coverage, and additional layer(s) of smaller cells deployed subsequently in order to increase capacity in selected areas and/or  fill residual coverage gaps from the previous layer(s)\footnote{A typical mobile network nowadays consists of the superposition of multiple radio access technologies (RAT) operating at different frequencies, including 2G (GSM 900 and GSM 1800), 3G, 4G and prospectively also 5G. The deployment of each new RAT involves the addition of further layers to the overall radio coverage.}. 
Following the same rationale, our synthetic network 
consists of three distinct cell layers, namely ``macro'', ``meso'' and ``micro'' layers, with different  sizes and densities of radio cells. 

For each layer, antennas  are placed according to a semi-regular hexagonal pattern with superimposed random jitter so as to mimic the irregular placement of antenna towers in real-world deployments while still retaining a certain degree of uniformity in tower density.  The latter condition ensures that the spatial distribution of antennas, hence the spatial pattern of radio coverage, remains independent from, and neutral to, the spatial distribution of mobile phones  across the considered region. In other words, we are neither introducing an explicit matching nor an explicit mismatching between cell density and population density --- an aspect that might advantage or disadvantage some estimator vis-à-vis the others.   
In the considered network deployment there are a total of 204 antenna locations --- 27, 156 and 21 respectively for macro, meso and micro cell layer --- placed at the locations shown in Fig. \ref{fig:contours}. 

Every antenna carries a triplet of $120^\circ$ sector cells oriented in different azimuth directions.  The signal propagation model implemented in \texttt{mobloc}  follows a simple geometric model with configurable parameters.  For a generic tile $j$, the received signal strength from a generic cell $i$ is computed at the tile center and from there 
 the so-called  ``signal dominance'' value $s_{ij}$ is derived. If its value  falls below a minimum threshold (set to 0.05 in our simulations) then tile $j$ is considered to fall outside the coverage area of cell $i$. All cells for which the signal dominance value exceeds the minimum threshold ``cover" tile $j$ and  compete to serve the mobile phones therein. If the tile is covered by multiple radio cells, then each mobile phone selects the serving cell independently from other phones  and with probabilities 
that are proportional to the signal dominance values of the competing cells in that tile, as modelled by  Eq. \eqref{eq:s2p}.

In our scenario the antenna and cell parameters are set to layer-specific values for height, power, azimuth orientation, path loss exponent, etc. In Fig. \ref{fig:contours} we show the contour of the cell coverage areas for three sample antennas --- one for each layer. It should be clear from this figure that the  coverage areas of each cell may overlap (i) with  other cells mounted on the same antenna; (ii) with other cells  mounted on other neighboring antennas from the same radio layer; and (iii) with other cells mounted on other neighboring antennas from different radio layers.

\begin{figure}[tb!]
\centering
\includegraphics[trim={20 10 0 0},clip=true,width=0.99\linewidth]{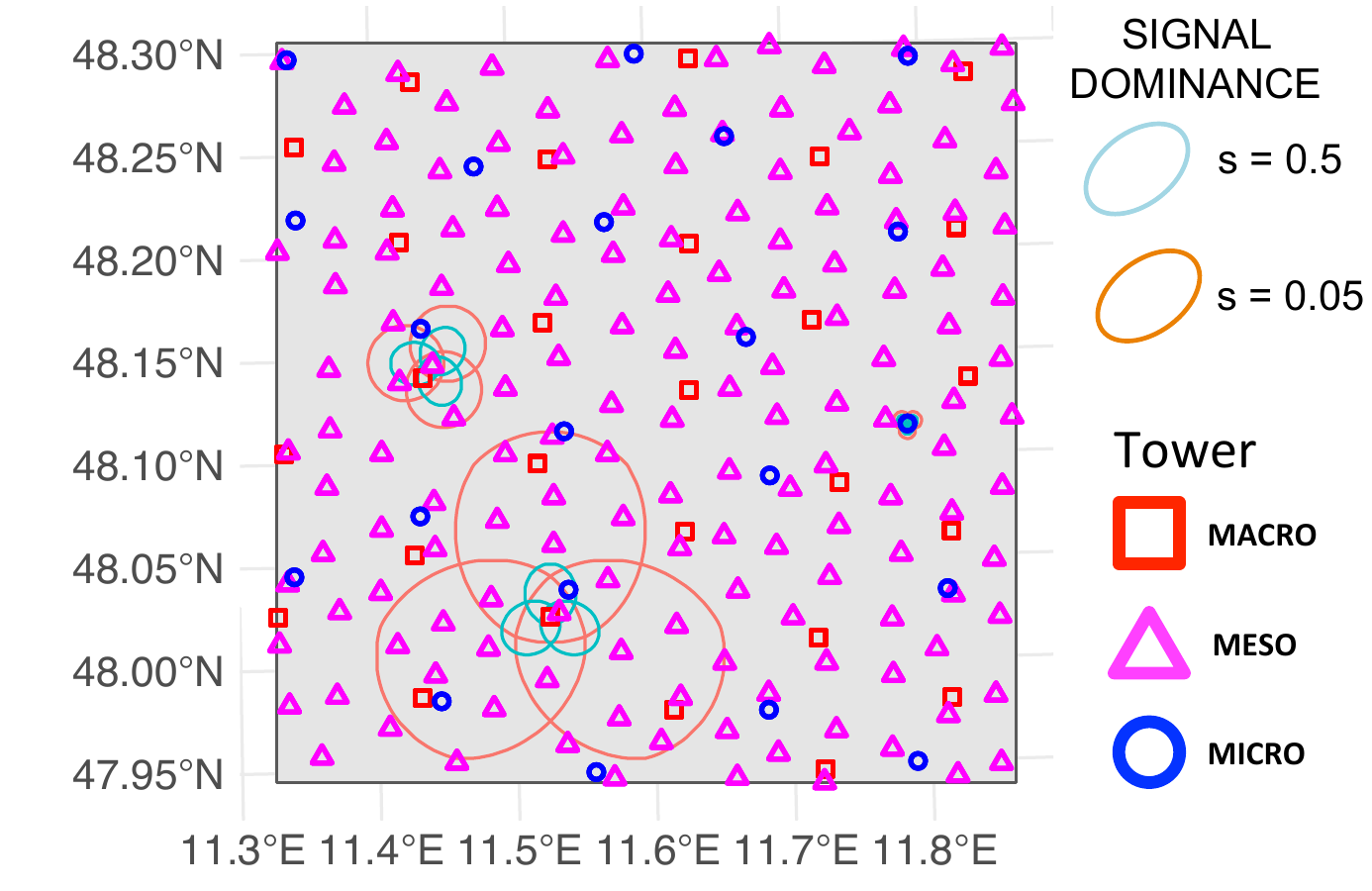}
\caption{Tower locations. Note the different markers for the three radio layers: macro, meso and micro.
The figure  also shows the cell coverage contour plots,  i.e., the locus of points where signal dominance reaches the minimum threshold value 0.05 and the median value 0.5, for one sample tower of each layer.}
\label{fig:contours}
\end{figure}

\begin{figure*}[tb!]
\centering
\subfigure[Vor-T]{\includegraphics[trim={0 10 0 0},clip=true,width=0.32\linewidth]{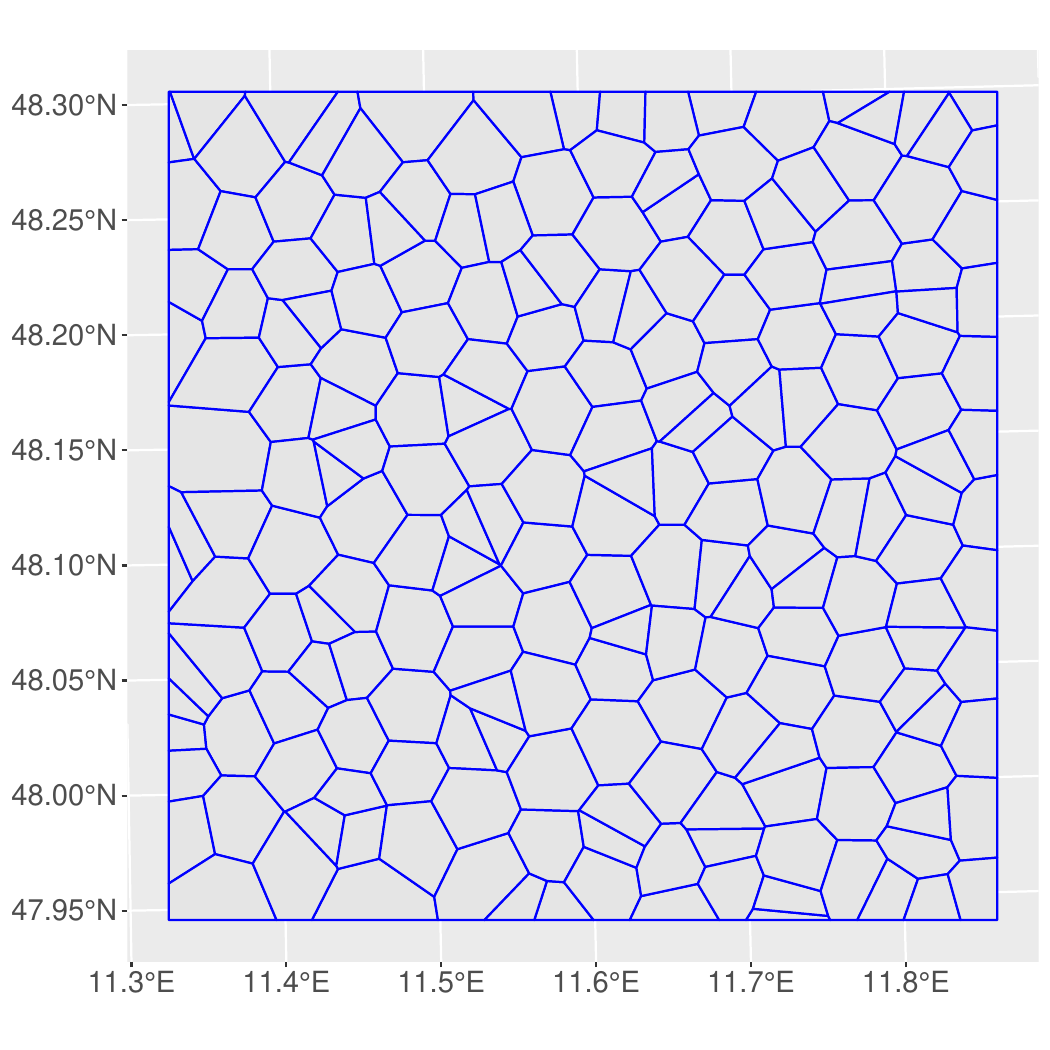}\label{fig:cov_ma}} 
\subfigure[Vor-O]{\includegraphics[trim={0 10 0 0},clip=true,width=0.32\linewidth]{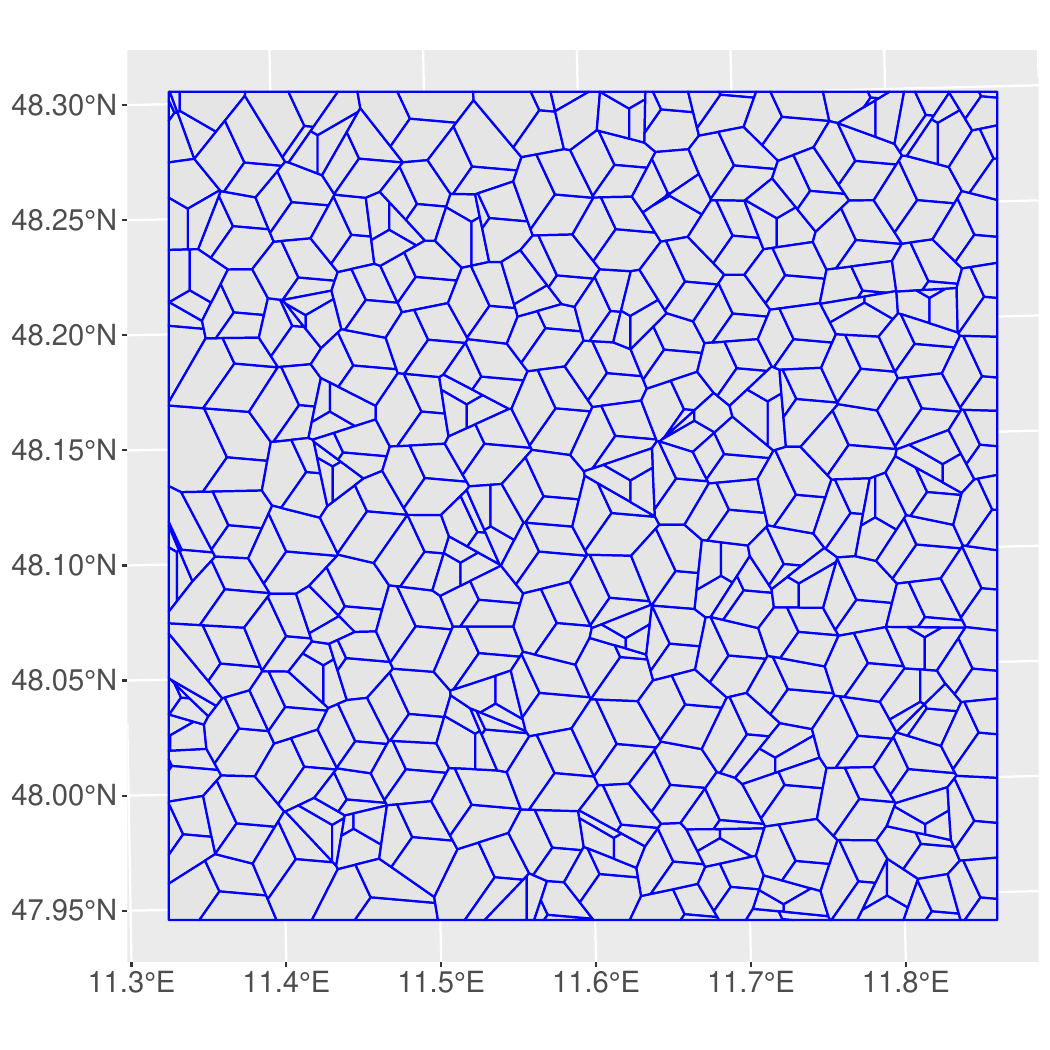}\label{fig:cov_me}}
\subfigure[Vor-B]{\includegraphics[trim={0 10 0 0},clip=true,width=0.32\linewidth]{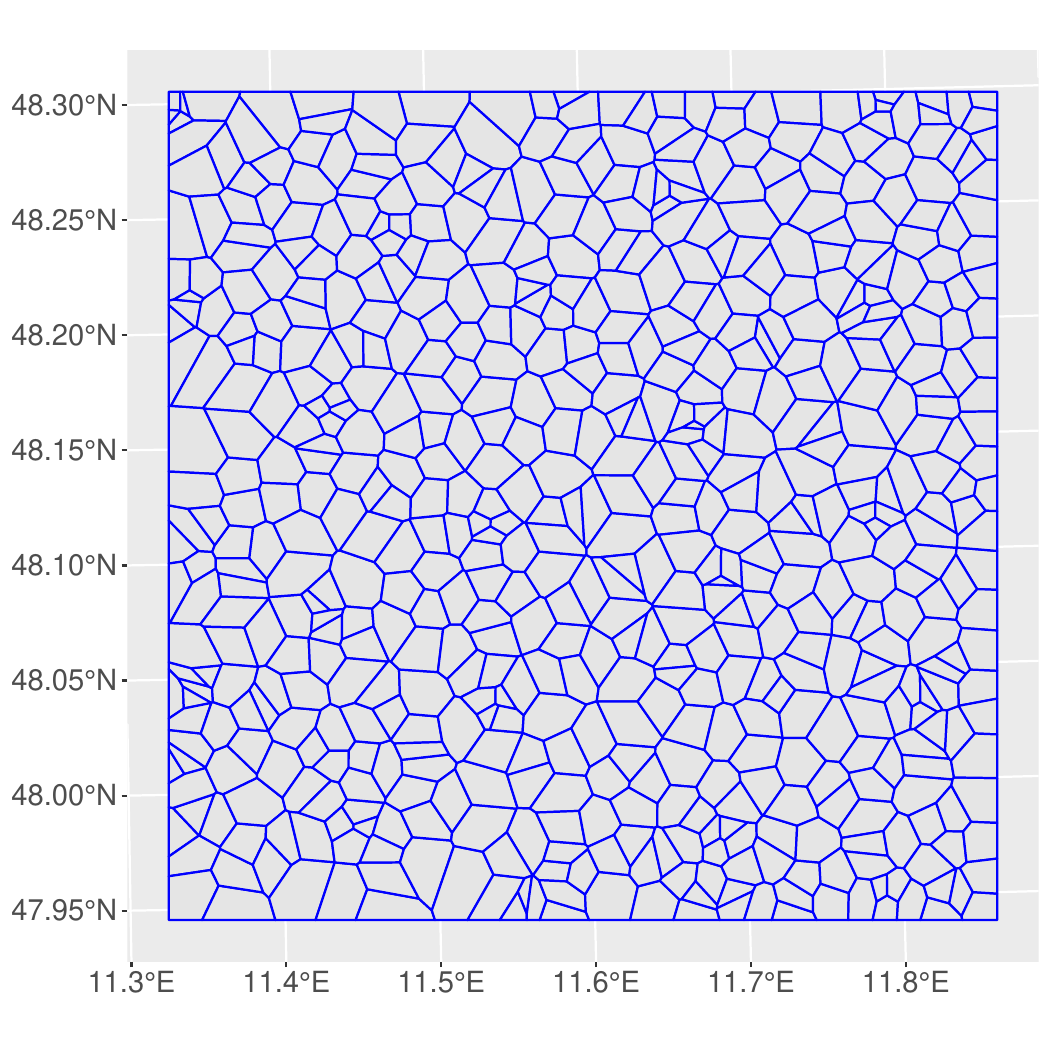}\label{fig:cov_mi}}
\caption{Voronoi diagrams for the three tessellation options.}
\label{fig:voronoidiagrams}
\end{figure*}

\begin{figure*}[tb]
\centering
\subfigure[Ground truth population (GTP)] 
{\includegraphics[trim={20 100 20 80},clip=true,width=0.5\linewidth]{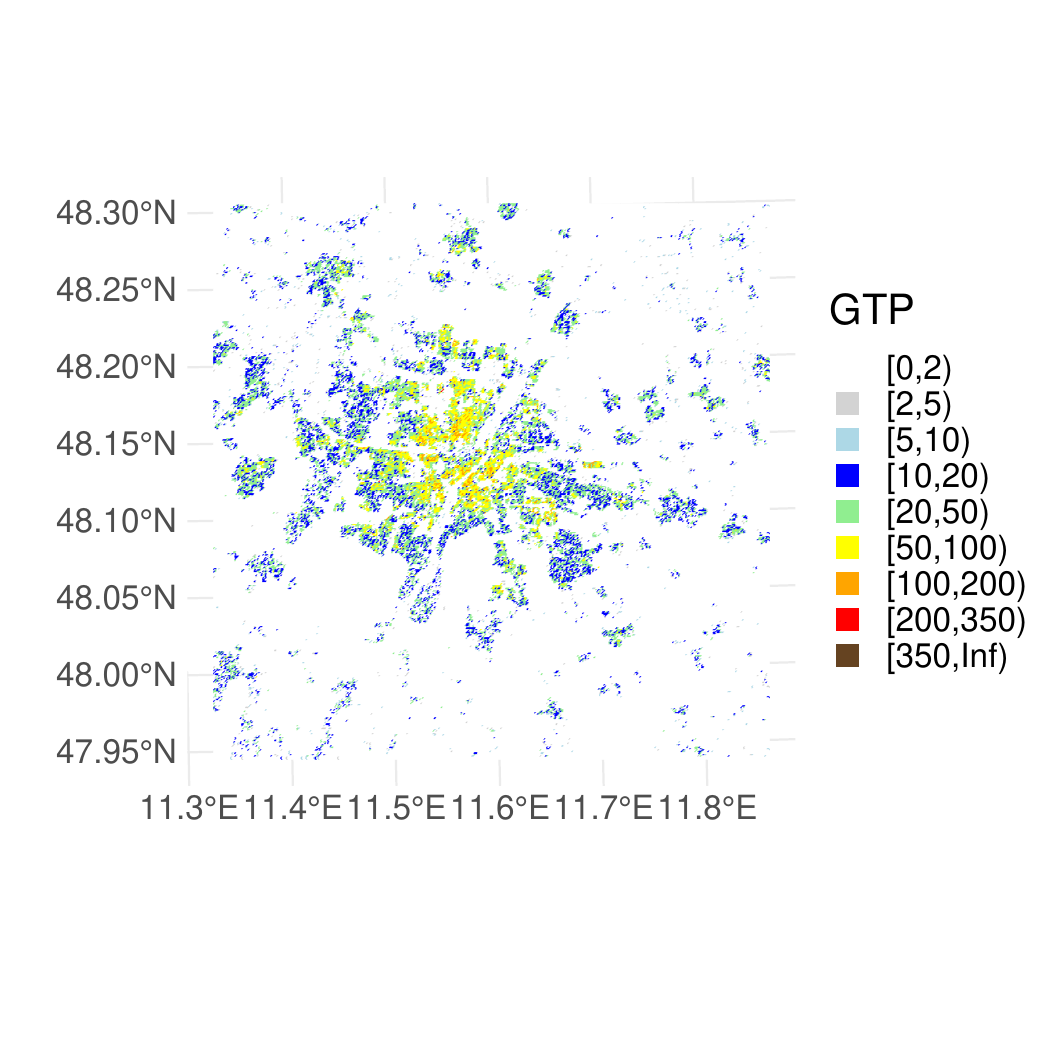}\label{fig:truepop}}\\
\subfigure[Vor-T] {\includegraphics[trim={20 100 120 80},clip=true,width=0.32\linewidth]{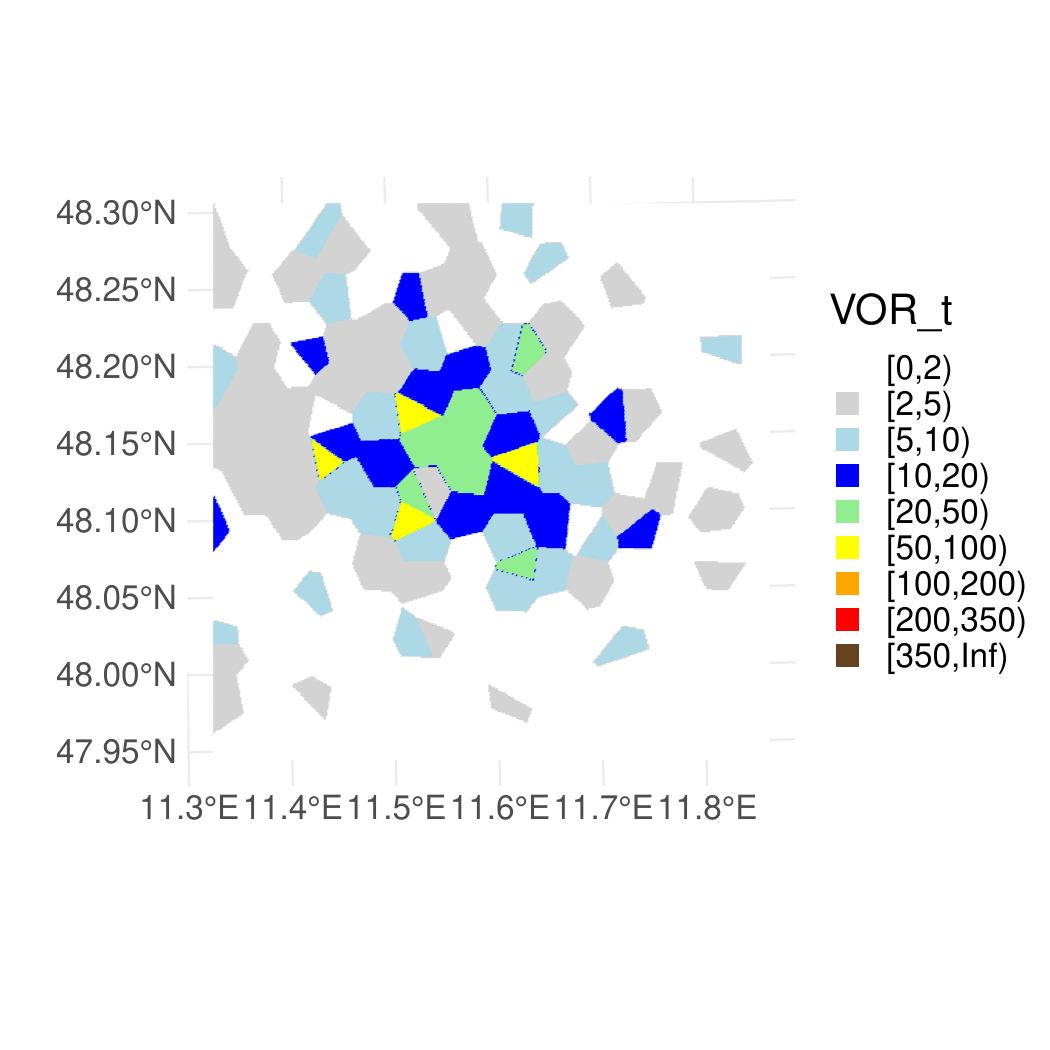}\label{fig:mapvort}}
\subfigure[Vor-O] {\includegraphics[trim={20 100 120 80},clip=true,width=0.32\linewidth]{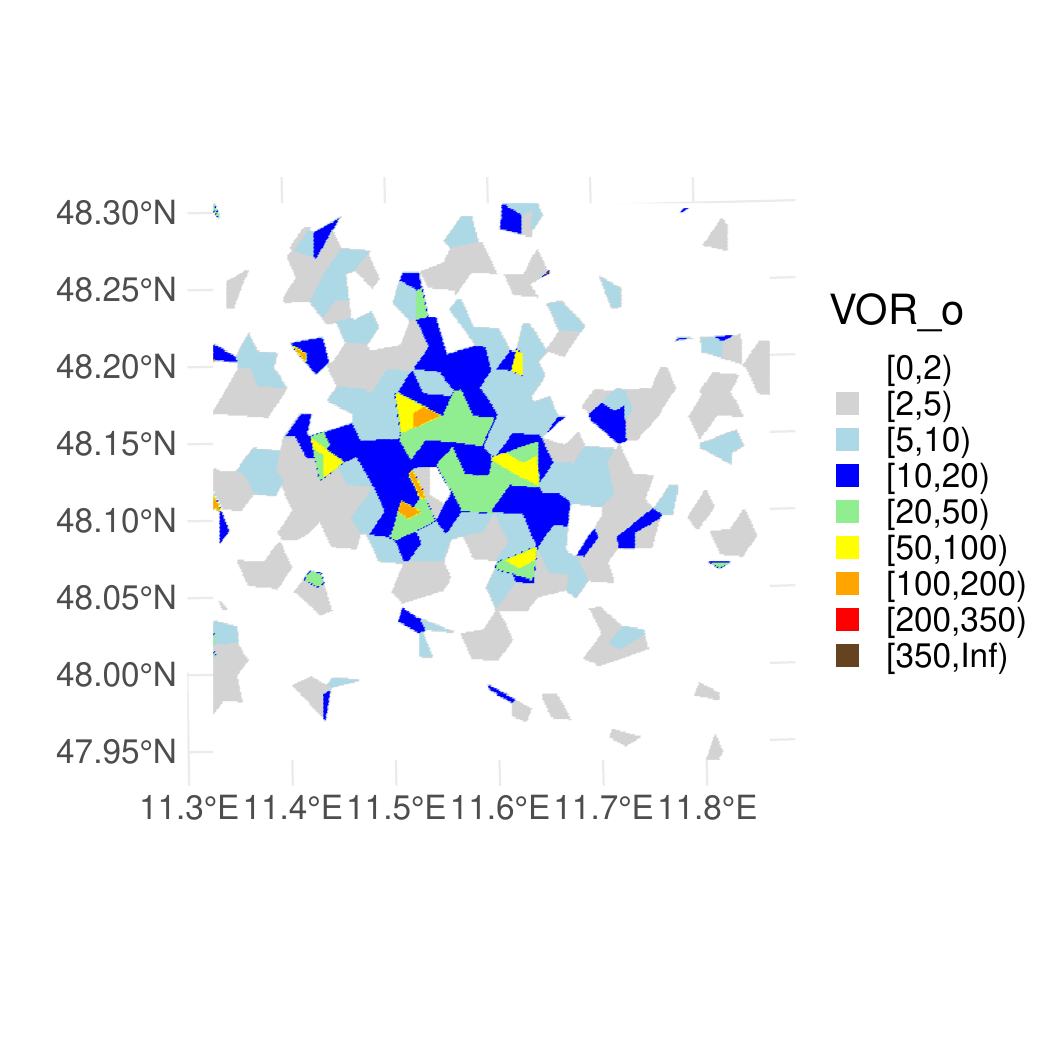}\label{fig:mapvoro}}
\subfigure[Vor-B] {\includegraphics[trim={20 100 120 80},clip=true,width=0.32\linewidth]{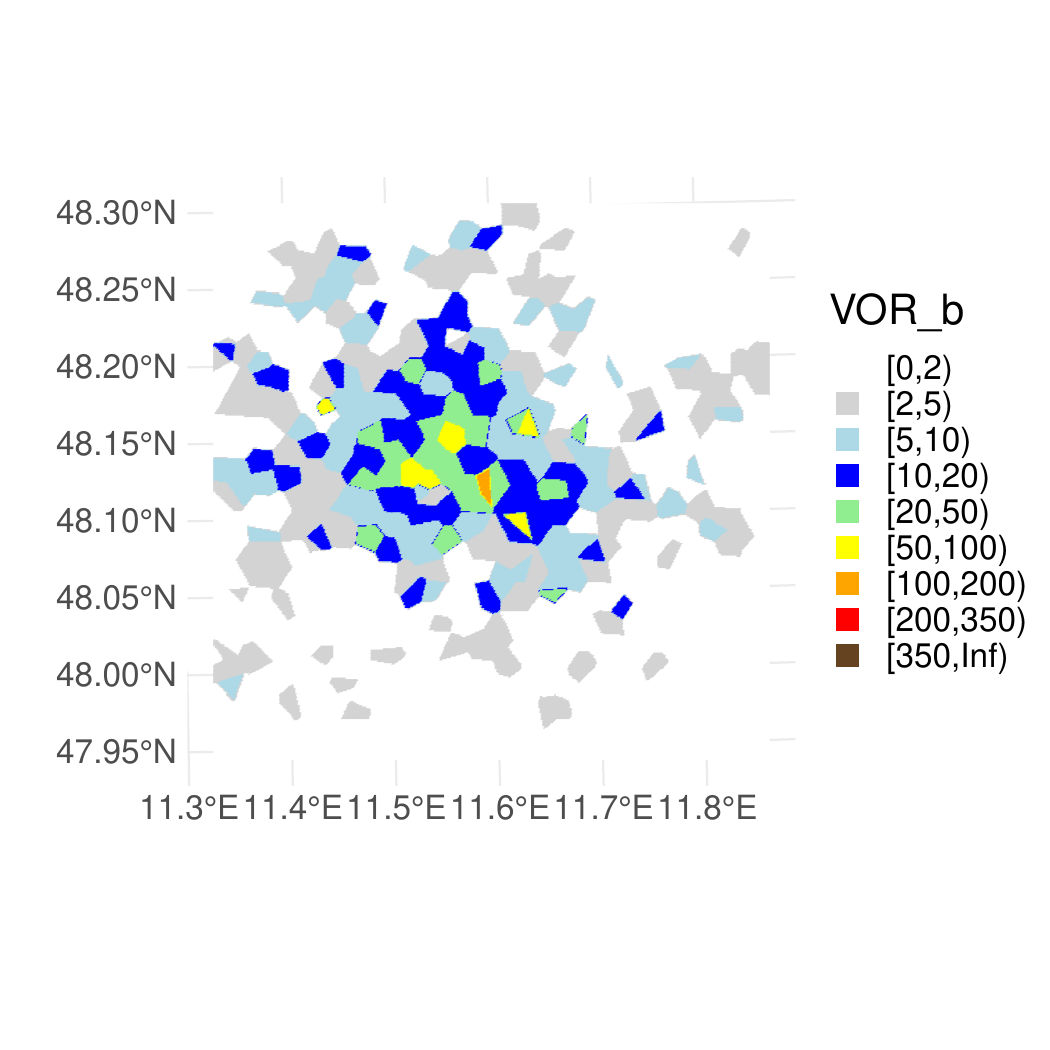}\label{fig:mapvorb}}\\
\subfigure[SB] {\includegraphics[trim={20 100 120 80},clip=true,width=0.32\linewidth]{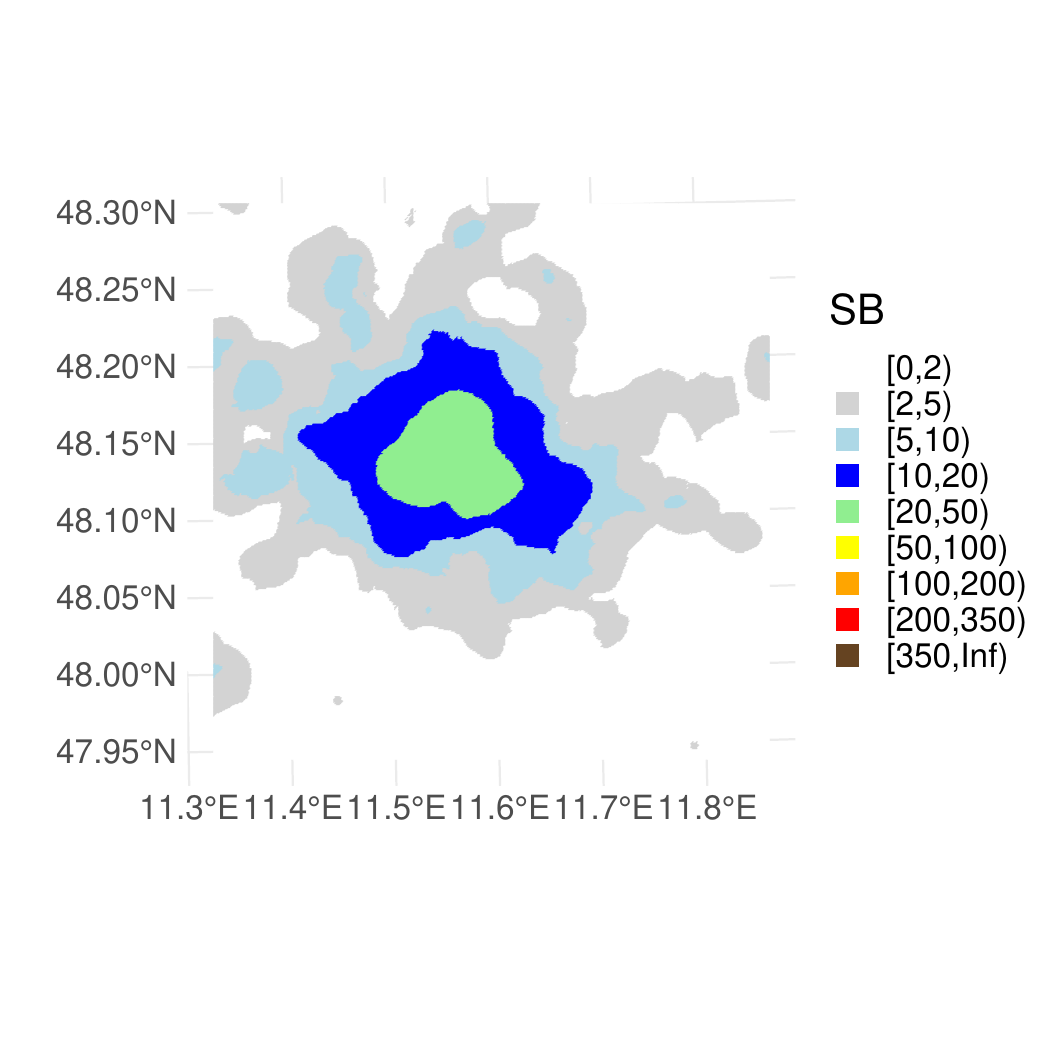}\label{fig:mapsb}}
\subfigure[ML/EM] {\includegraphics[trim={20 100 120 80},clip=true,width=0.32\linewidth]{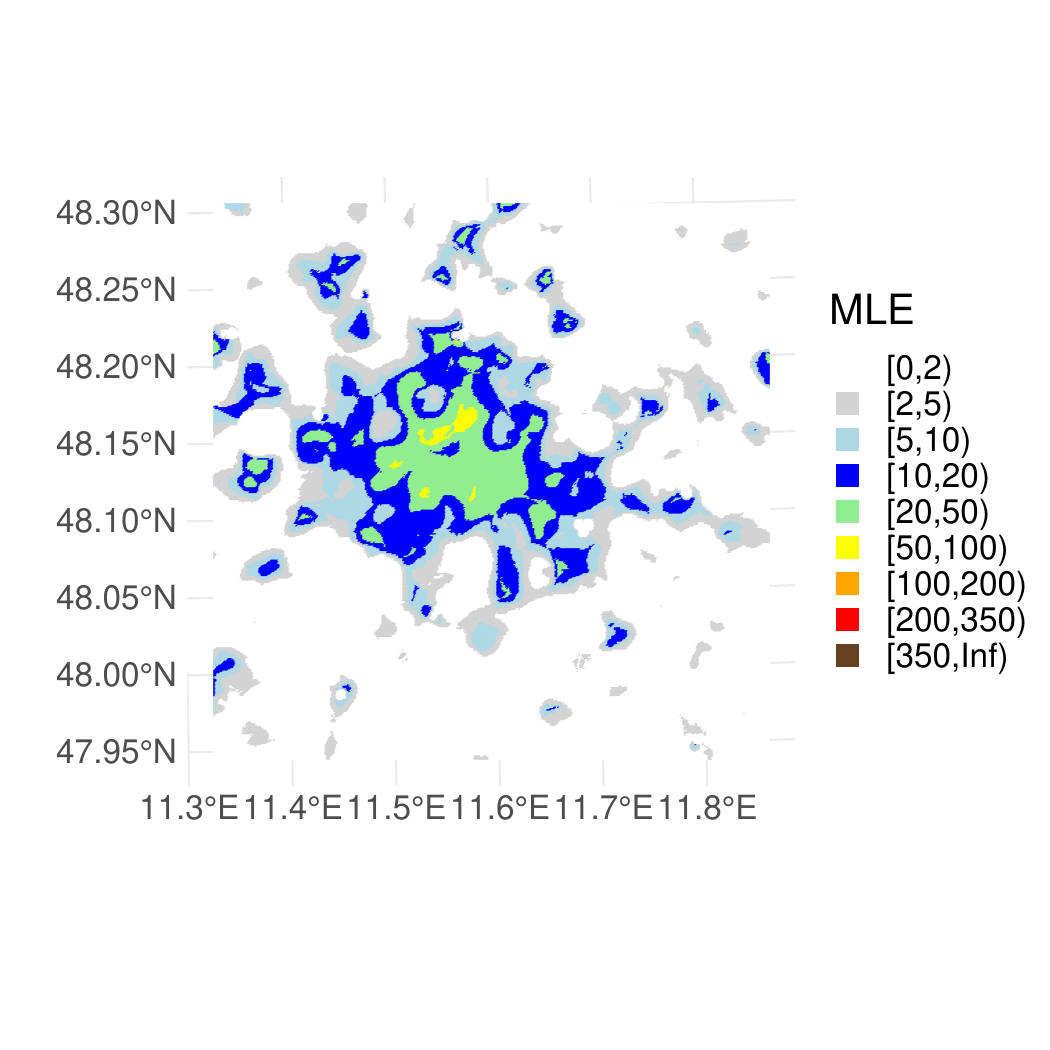}\label{fig:mapmle}}
\subfigure[DF] {\includegraphics[trim={20 100 120 80},clip=true,width=0.32\linewidth]{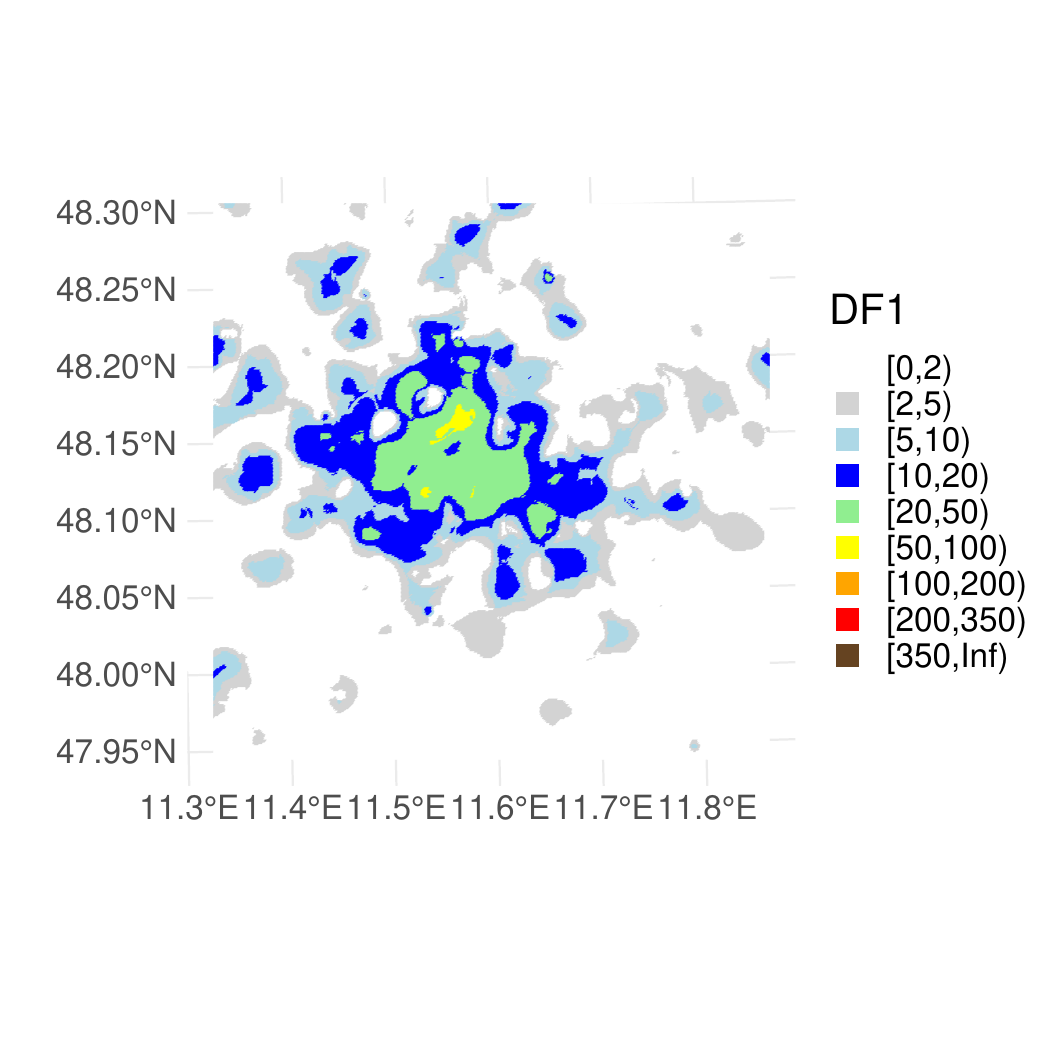}\label{fig:mapdf}}
\caption{Spatial distribution of actual population (a) and corresponding estimates with different methods (b)-(g) plotted in the same color scale.}
\label{fig:populationgrids}
\end{figure*}

\subsection{Estimators}
In our simulation study we have compared the performances of  six different density estimation approaches: three variants of Voronoi tessellations and three different estimators for overlapping cell locations. 

The following Voronoi tessellation methods were considered, resulting in the  Voronoi diagrams shown in Fig. \ref{fig:voronoidiagrams}:
\begin{itemize}
    \item {\bf Vor-T} --  Voronoi tessellation with one seed for each antenna tower. This simple method represents the standard approach in most previous literature. 
    \item {\bf Vor-O} --  Voronoi tessellation with one seed for each radio cell placed at a small fixed distance (10 m) from the respective tower location in the azimuth direction. This method, first proposed by \cite{meersman2016}, is the simplest way to take into account the azimuth orientation of directional cells. 
    \item {\bf Vor-B} --  Voronoi tessellation with one seed for each radio cell placed at the  barycenter (mean point) of the signal dominance profile for the given cell.
\end{itemize}
Furthermore, we have  implemented and compared the following three estimators for overlapping cells.
\begin{itemize}
    \item {\bf SB} -- the simple bayesian estimator in Eq. \eqref{eq:SBelement}. 
     \item {\bf MLE/EM} -- the MLE computed with the iterative EM procedure based on Eq. \eqref{eq:shepp} after $n=200$ iterations.
     \item {\bf DF} -- our novel estimator computed from Eq. \eqref{eq:DFclosed}. 
\end{itemize}
All these estimators are instantiated with a prior vector  $\bm{a}$ and a model matrix $\bm{P}$. For this study we have considered a non-informative uniform prior vector,  with all equal elements. The model matrix was assumed to match exactly the emission matrix used in the data generation process.

\begin{figure}[tb]
\centering
\includegraphics[trim={25 0 0 0},clip=true,width=0.8\linewidth]{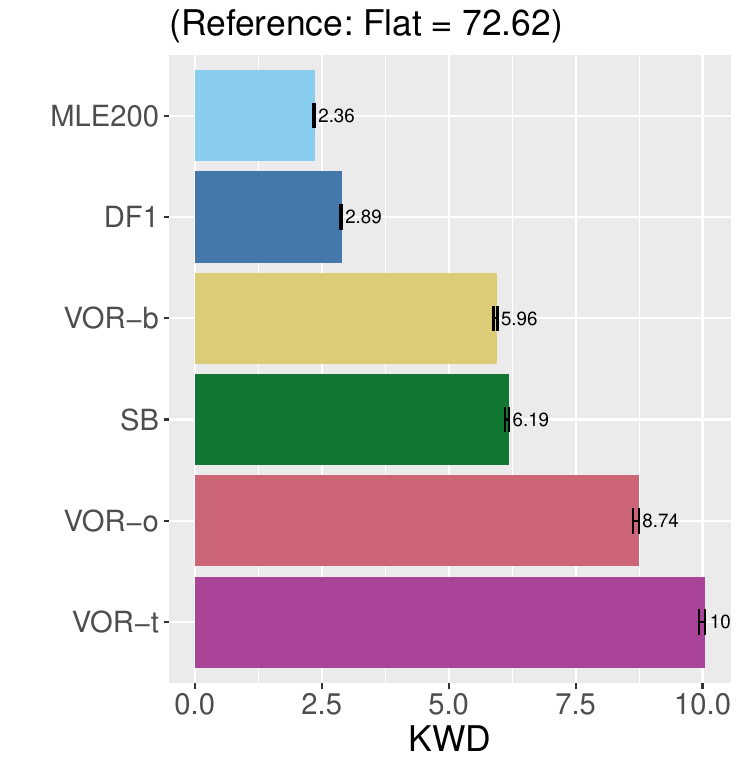}
\caption{Kantorovich-Wasserstein Distances (KWD) between the ground truth density and different estimates. The KWD values are expressed in tile units. As the length of a single tile is 100 me, the KWD value $w=10$ represents an average spatial error of 1 km.}
\label{fig:kwd}
\end{figure}

\begin{figure}[tb]
\centering
\includegraphics[trim={0 120 0 10},clip=true,width=0.99\linewidth]{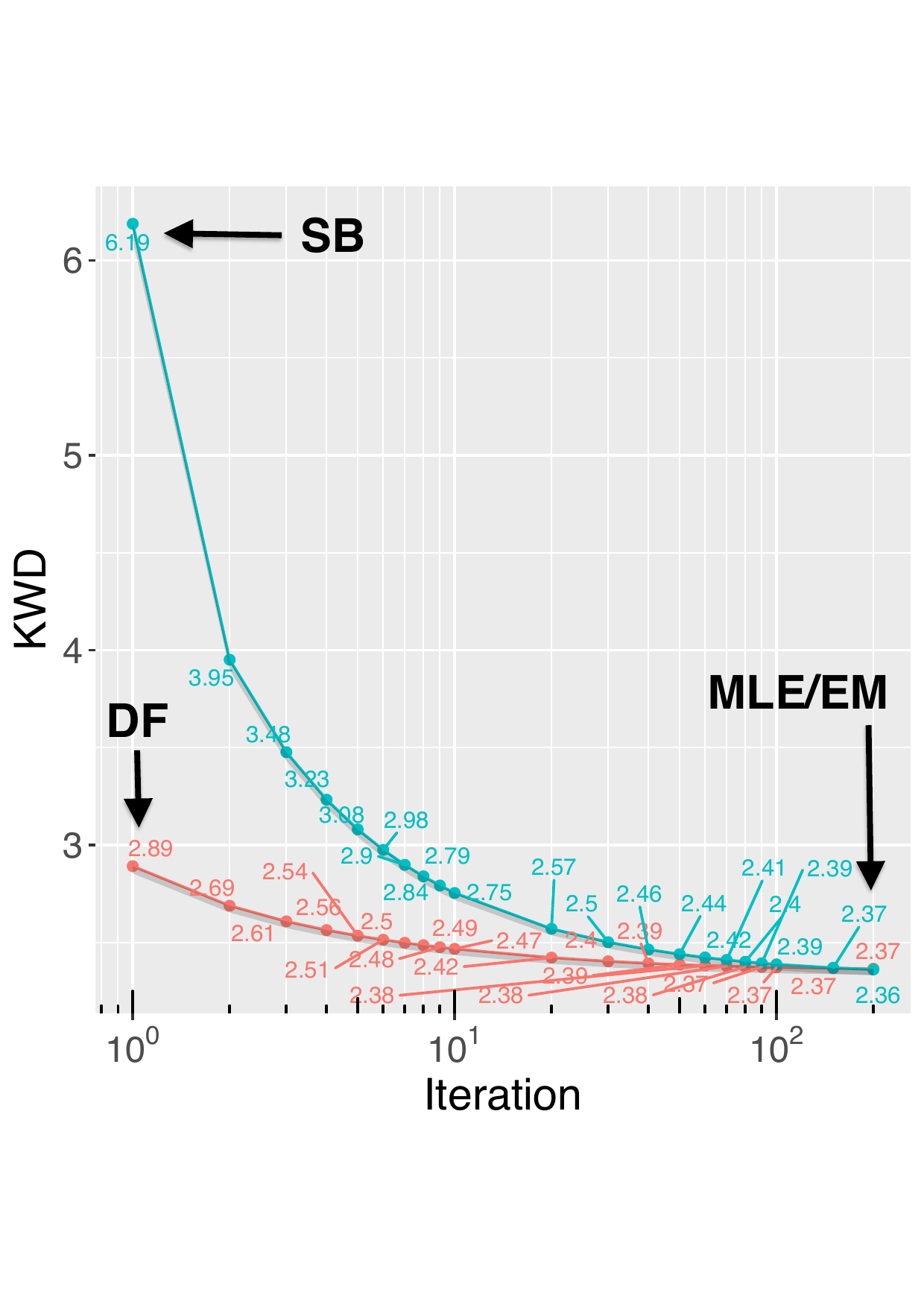}
\caption{Convergence of EM procedure: KWD value (from GTP) against iteration index $n$. }
\label{fig:kwd-convergence}
\end{figure}

\subsection{Similarity measure}
In order to evaluate the performance of a generic estimation method relative to competing candidates we must resort to some measure  of similarity  $w(\hat{\bm u}, \bm v)$ between the estimated density $\hat{\bm u}$  and the ground truth density $\bm v$. To this aim, we resort to the Kantorovich-Wasserstein Distance (KWD) with Euclidean ground distance. KWD (defined formally below) is known in the literature with several different names: Earth Mover Distance, Mallow distance, etc. (see \cite{stegua2019} and references therein). KWD is being increasingly considered in various scientific fields  
as an alternative to  more traditional probabilistic measures like, e.g.,   Kullback-Leibler divergence, Hellinger divergence,  Pearson divergence. These measures belong to the class of \emph{f}-divergences \cite{fdiv10} and are built from the differences between the distribution \emph{values} at individual tiles, with no consideration for their spatial \emph{positions}. Such  tile-by-tile view does not capture the \emph{spatial structure} of the distribution domain, and therefore misses completely the effects of ``horizontal" spatial errors  and  physical proximity (or lack thereof) between distribution masses. Instead, all these aspects are taken  inherently into account by KWD\footnote{The difference between  \emph{f}-divergences and KWD is brightly explained in terms of  ``vertical’’ versus ``horizontal’’ similarity in \url{https://jeremykun.com/2018/03/05/earthmover- distance}.}. 

The limitations of \emph{f}-divergences  are shared also by the  so-called ``Euclidean distance" between distributions, defined as $\| \bm u - \bm v \| = \sum_{j=1}^J{\left(u-v \right)^2}$ ($J$ is the number of elements, i.e., the total number of tiles),  for which the input elements $\bm u$,  $\bm v$ are seen as points in a $J$-dimensional \emph{abstract} space  rather than distributions of size $J$ in a two-dimensional \emph{physical} space, leading to the curious paradox that adopting  \emph{the Euclidean distance between distributions} in the abstract space  effectively disregards \emph{the Euclidean distance between tiles} in the physical space. Instead, the latter is central to the KWD definition and motivates its adoption for the present study. 

For two input distributions $\bm u$ and $\bm v$ defined over the same spatial grid $\mathcal{J}$ and with same total mass $\bm{1}^\mathsf{T}_J \bm u = \bm{1}^\mathsf{T}_J \bm v = \myctot$, KWD may be interpreted as the minimum cost of transporting the mass from configuration $\bm u$ to $\bm v$ (or vice-versa) when the cost of transporting a unit of mass between two generic tiles $j$ and $k$ equals the (physical) ground distance $d_{kj}$ between them. KWD can be expressed as 
the solution $w$ of the following Linear Programming (LP)  optimization problem:
\begin{equation}
\begin{array}{ll@{\quad}ll}
\text{minimize}  & w(\bm u, \bm v) \eqdef \frac{1}{C}\sum_{k}^{}{\sum_j { x_{kj}d_{kj} } } & & \\
\text{subject to} & \sum_{j}{x_{kj}}  = u_k   &   \forall \, k  \in \mathcal{J} \\
			& \sum_{k}{x_{kj}} = v_j   &   \forall \, j \in \mathcal{J} \\
                 &                                              x_{kj} \geq 0  & \forall \, k,j  \in \mathcal{J}
\end{array}
\label{eq:LPGall}
\end{equation}
It can be easily shown that KWD defined in this is symmetric and fulfills the  triangular inequality, therefore qualifies fully as a ``distance" in the formal mathematical sense. 

The direct resolution of the LP problem in Eq. \eqref{eq:LPGall} is computationally expensive, preventing the computation of exact KWD values on very large grids. 
In a recent work Gualandi \emph{et al.} \cite{stegua2019}  showed that a close approximation to the exact KWD value, within a provable deterministic bound, can be obtained by solving a transportation problem over a regular lattice. In the proposed solution, the integer parameter $L$ determines  the lattice density and acts as a tuning knob to trade-off computation resources (time and memory) with approximation error: they show that $L=3$ is sufficient to achieve a KWD approximation that is guaranteed to be within 1.2\% of the exact KWD value in all scenarios.  An efficient open-source implementation of their method is publicly available along with code wrappers for R and Python\footnote{ \url{https://github.com/eurostat/Spatial-KWD}}. All KWD values presented in the present study were obtained with the R package
 \texttt{SpatialKWD} available in CRAN\footnote{\url{http://cran.r-project.org/web/packages/SpatialKWD}}. Unless differently specified we retained the default parameter setting $L=3$.

A physical interpretation of KWD $w(\hat{\bm u}, \bm v)$  in our context is the following. Imagine that, starting from 
the estimated density $\hat{\bm u}$, we move each unit of mass (or mobile phone) across tiles so as to arrive at the final GTP configuration $\bm v$ (transportation plan), and we do so in a way that minimizes the total travelled distance (\emph{optimal} transportation plan). Some  units will not travel, while others will travel over shorter or longer distances. \emph{On average, a single unit will travel  a distance equal to the KWD value $w$}. 
In other words, since KWD is normalized to the total mass (note the factor $\frac{1}{\myctot}$ in Eq.  \eqref{eq:LPGall}) its value can be interpreted as the ``average distance'' in the optimal transportation plan between GTP and the estimated map (or vice-versa), and therefore as the ``average spatial error'' of the estimated density against the ground truth. 

\subsection{Results}
In Fig. \ref{fig:populationgrids} we report the density maps obtained with the six different estimation methods along with the ground truth population.
The corresponding KWD values between the final estimates and  GTP are reported  in Fig. \ref{fig:kwd}. 

Recall that for SB, ML/EM and DF the prior vector was set to the non-informative uniform vector with all equal elements (flat prior).  In the considered scenario the KWD value between the flat prior and GTP was 72.6 (average error of 7.6 km). This represents a naive absolute reference for assessing the accuracy gain of the estimation process. Even the simplest Vor-T estimator brings the KWD down to 10.0 (average error of 1 km), with a reduction factor of $\times 7$ from the flat prior reference. This confirms the (not entirely obvious) expectation that even the simplest density estimation approach is ``better than nothing". 

If cell azimuth orientation data is available, in addition to tower locations, one can implement the slightly more sophisticated Vor-O variant. This brings the KWD score down to 8.7 (average error 870 m), i.e., a 15\% improvement upon the simpler Vor-T option. 

If further detailed knowledge about the signal dominance values is available, then one could use this information within a Voronoi-based approach by setting the seeds to the exact cell barycenter. This is, in a nutshell, the Vor-B variant. One would expect Vor-B to perform better than Vor-O, since it uses more detailed information about the cell coverage area (beyond merely tower locations and azimuth orientation) and in fact the KWD goes further down to around 6 (average error 600 m).   

But if such detailed information is available, then one could also resort to the other three estimators that were designed specifically for overlapping cells. While the simplest SB estimator performs comparably to Vor-B in the considered scenario, a more substantial improvement can be obtained with DF and MLE: both methods bring the KWD below 3 (average error below 300 m) with a slight advantage of MLE/EM that yields  an average error around 240 m, i.e., a factor of $\times 4$ improvement over the most popular Vor-T method.

The price to pay for the higher accuracy of DF and MLE/EM vis-à-vis the other schemes  is a slightly higher computational complexity. However, for both DF and MLE/EM the computational burden is absolutely sustainable.  
Our naive R implementation took less than 4 seconds to compute the closed-form DF estimate for the considered scenario on a low-end laptop, and further optimizations are possible. 
As for the iterative EM procedure, 
our tests show that the number of iterations required to reach a good solution is not very high, in the order of tens or a few hundreds (see Fig. \ref{fig:kwd-convergence}) and, most importantly, each iteration is relatively straightforward to compute due to the simple structure of Eq. \eqref{eq:shepp}. 
Our naive, single-threaded implementation took less than 1 second  to complete one iteration for all 160,000 tiles in the considered scenario.
Since Eq. \eqref{eq:shepp} must be evaluated once for every super-tile at each iteration, the overall computation time scales linearly with the the number of 
super-tiles\footnote{It should be noted that the number of super-tiles scales sub-linearly with the size of the area of interest. For example, when the territory of a whole country is considered, large rural areas falling under the coverage of a single cell tend to be aggregated into a single super-tile. In general, rural areas with sparse radio infrastructure tend to contribute with a lower number of super-tile than densely populated urban areas of the same size.}. 
With further code optimization, parallelization and additional hardware resources the computational cost of MLE/EM is acceptable even for much larger instances. 

Finally, we provide some insight into the convergence behaviour of the iterative MLE/EM procedure.  Fig. \ref{fig:kwd-convergence} plots the KWD (against GTP) of the output of the MLE/EM procedure after $n$ iterations for different values of $n$ (note the logarithmic scale) when the starting point is set to the non-uniform flat prior. Recall from Proposition \ref{propSBfirstiteration}  that the outcome after the first iteration ($n=1$) is equivalent to SB. One additional iteration  ($n=2$) is sufficient to halve the average error. The speed of improvement slows down and basically stabilizes after the first $n=100$ iterations. 
Fig. \ref{fig:kwd-convergence}  also reports  the value obtained with the DF estimator from Eq.  \eqref{eq:DFclosed}. Recall that the DF solution can be computed directly, with no iteration. However, the DF solution can also be  used as an alternative initialization point for the iterative EM procedure. During the study we wondered whether such a different initialization strategy could eventually lead to a better final solution, and for this reason we report in Fig. \ref{fig:kwd-convergence} also the KWD values obtained by the iterative EM procedure when the initial point is set to the DF solution. However, we found that the solution accuracy after $n=200$ iterations remains basically unchanged in the considered scenario.

\section{Conclusions and future work}\label{sec:conclusions}

The analytical and numerical results presented in this study indicate that, when the model matrix $\bm P$ is known accurately, the best estimates are obtained with MLE/EM and DF and that the accuracy gain with respect to the more popular Voronoi method is substantial, by a factor of $\times 4$ for Vor-T (based on tower location data) and $\times 3$ for Vor-O (considering also cell direction data) when quantified through KWD.
A substantial improvement, by a factor of $\times 2.5$, is also in place against the SB estimator. 

The gain in accuracy of the MLE/EM and DF estimators comes at the price of a relatively contained computation cost, thanks to the relatively simple structure of the iterative  Eq. \eqref{eq:shepp} for MLE/EM and the availability of a closed-form solution for DF. Our simple single-threaded implementation solves an instance of  160,000 tiles on a low-end commercial laptop in about 3 minutes for MLE/EM with $n=200$ iterations, and less than 4 seconds with DF. Early tests on very large instances up to 2,000,000 tiles (not reported in this study) indicate that computational cost would not be a critical factor for a professional implementation. 

 We remark that the MLE/EM solution is analytically equivalent to the MLE solution developed independently in \cite{ricciato16}, but the resolution procedure derived in Eq. \eqref{eq:shepp}  is much simpler to implement and faster to compute than resorting to general-purpose solvers as done in \cite{ricciato16}. 

 The main limitation of all Voronoi approaches is the implicit assumption, inherent to the adoption of a tessellation approach,  that mobile phones always connect to the nearest cell tower. This assumption is problematic in  real-world scenarios due to several features of cellular radio planning: multi-layer radio coverage design, mixing of small/large cells, dynamic power control and load balancing mechanisms, etc. On the other hand, the simplest Voronoi variants Vor-T and Vor-O require very little information about the radio network, and therefore remain appealing for practical applications that can tolerate lower levels of spatial accuracy. 
 
Conversely, if spatial accuracy is at premium, efforts should be invested in acquiring more detailed information about radio network coverage in the form of detailed cell footprint profiles $s_{ij}$'s, based on which more accurate predictions of emission probabilities $p_{ij}$'s can be derived.

 In between the ``low-information low-accuracy" methods, namely Vor-T and Vor-O,  and ``high-information  high-accuracy" methods, namely  MLE/EM and DF,  our results  indicate that little room is left for intermediate approaches like Vor-B or SB. The latter seem to inherit the weaknesses of the other two classes:  they require very detailed information in order to be instantiated, comparable to that needed for MLE/EM and DF, but they fail to make good use of such information and end up with a level of accuracy that is only slightly better than the much simpler Vor-T and Vor-O methods.

The above results contribute to advance the understanding of the relative advantages and costs of the recently proposed geolocation methods based on overlapping cells  vis-\`a-vis the more traditional tessellation methods. At the end of the day, in all practical applications the choice between one approach or the other is a matter of balancing benefits and costs.  
On the cost side, based on the present study we can claim  that the dominant factor is not related to computation resources (and time) but rather to the efforts required to acquire detailed data about cell coverage profiles.
On the benefit side, our simulation framework allows to assess the maximum possible gain that can be expected for different network deployment scenarios under the ideal assumption that the emission probability matrix $\bm P$ is known precisely. 

The natural next step in the research is to investigate the sensitivity of the final solution to uncertainty in model parameters $p_{ij}$'s, or in other words assess the robustness of  estimators to model mismatching errors.
Another distinct but closely related direction of research should look at assessing the  impact of different scenario parameters on the relative performances of the various estimators.
Finally, another direction for further exploration
concerns the effect of informative priors, both in terms of potential gain or loss of accuracy that may be yielded by considering  correct or incorrect informative priors, respectively.
Answering these open research questions is part of our planned follow-up work.

\section*{Acknowledgments}
The authors are grateful to Marco Ramljak for his help with the implementation or R routines for all estimators and data generation modules, and for making his code publicly available in the open-source notebook available at \url{https://r-ramljak.github.io/MNO_mobdensity}. The authors thank the three anonymous reviewers for their constructive and very detailed comments which helped to improve an earlier version of this work.  

\bibliographystyle{unsrt}
\bibliography{densitybiblio2}

\begin{IEEEbiography}
 [{\includegraphics[width=1in,height=1.25in,clip,keepaspectratio]{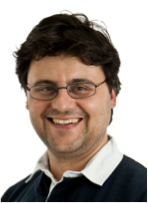}}]
{Fabio Ricciato} 
Fabio Ricciato received a Laurea in Electrical Engineering (1999) and PhD in Information and Communication Technologies (2003) from University La Sapienza, Italy.
Between 2004 and 2017 he worked in telecommunications research in different institutions. He served as assistant professor in Italy (Faculty of Engineering at University of Salento) and associate professor in Slovenia (Faculty of Computer Science at University of Ljubljana), teaching different subjects in the telecommunication fields from networking to signal processing. He served as middle manager in two private Research \& Technology Organisations in Vienna, namely the Telecommunications Research Center Vienna and the Austrian Institute of Technology, leading small and medium size research units of up to 45 researchers. 
In January 2018 he joined the European Commission as a member of the Eurostat Task Force on Big Data. He is currently serving as Statistical Officer in the Eurostat Unit A5 on Methodology and Innovation in Official Statistics.  His current interests revolve around the modernisation of official statistics, exploitation of novel data sources, privacy-preserving approaches to cross-institutional data analytics and analysis of mobile network operator data. 
\end{IEEEbiography}

\begin{IEEEbiography}
 [{\includegraphics[width=1in,height=1.25in,clip,keepaspectratio]{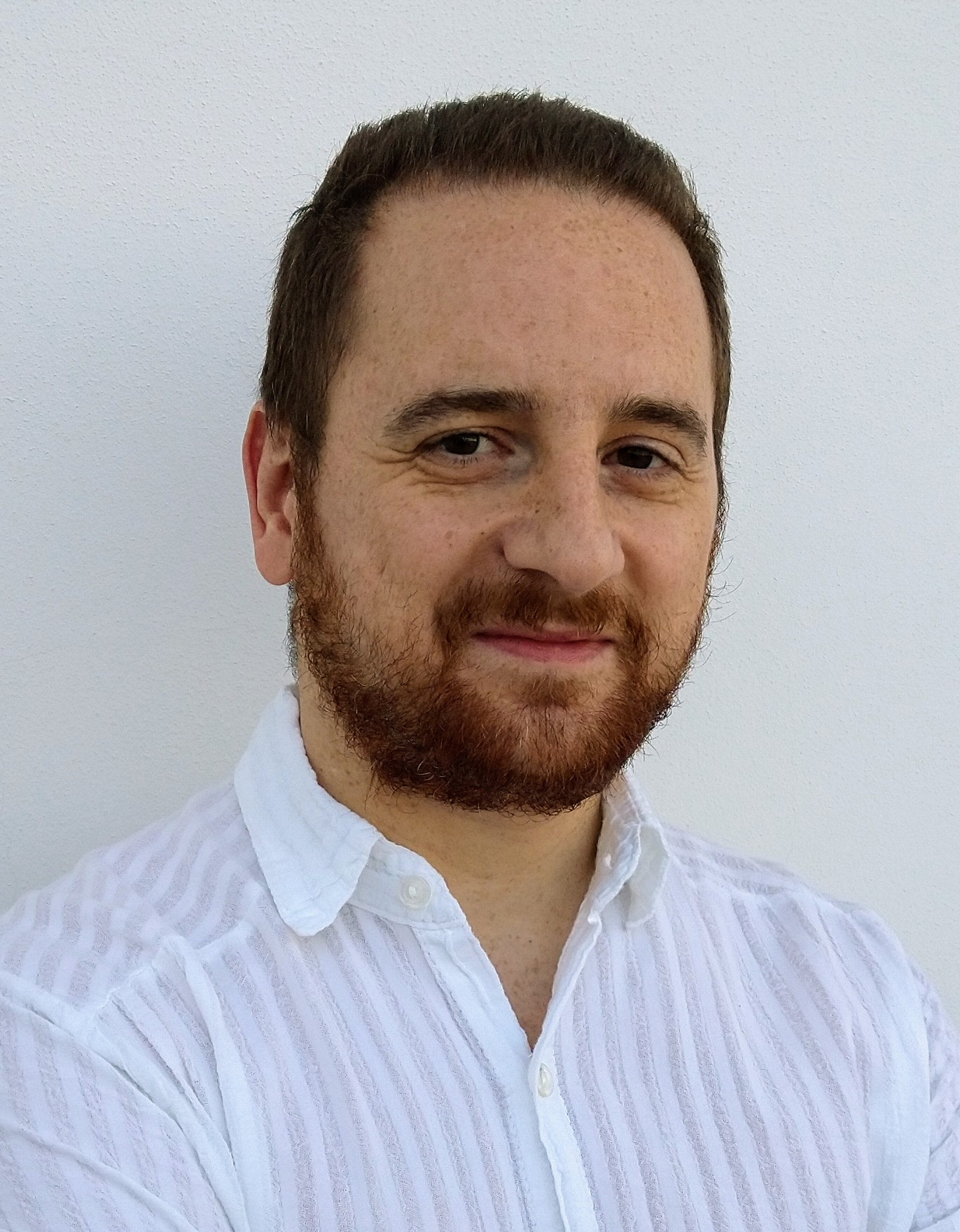}}]
{Angelo Coluccia} 
(M'13-SM'16) received the PhD degree in Information Engineering in 2011, and is currently an Associate Professor of Telecommunications at the Department of Engineering, University of Salento (Lecce, Italy). He has been a research fellow at Forschungszentrum Telekommunikation Wien (Vienna, Austria), and has held a visiting position at the Department of Electronics, Optronics, and Signals of the Institut Sup\'erieur de l’A\'eronautique et de l’Espace (ISAE-Supaero, Toulouse, France). His research interests are in the area of multi-channel, multi-sensor, and multi-agent statistical signal processing for detection, estimation, localization, and learning problems. Relevant application fields are radar, wireless networks (including 5G and beyond), and emerging network contexts (including intelligent cyber-physical systems, smart devices, and social networks). He is Senior Member of IEEE and Member of the Technical Area Committee in Signal Processing for Multisensor Systems of EURASIP (European Association for Signal Processing).
\end{IEEEbiography}

\cleardoublepage

\appendix
\section{Supplemental Material (Proofs)}

\subsection{Proof of Proposition~\ref{prop_convex}}\label{App2}

It is  sufficient to show the negative semidefiniteness of the Hessian of the log-likelihood given in Eq. \eqref{eq:loglike0}, which implies that it is a concave function. The generic  $(k,h)$ element of the Hessian is given by the second derivatives, i.e.,
\begin{equation}
  H_{k,h} = - \sum_{i=1}^I c_i \frac{p_{ik} p_{ih}}{(\bm{p}_i^\mathsf{T} \bm{u})^2}  
\end{equation}
wherein $\bm{p}_i^\mathsf{T}$ denotes the $i$th row of $\bm{P}$. Thus, for any vector $\bm{x}=[x_1 \cdots x_I]^\mathsf{T}$ and assuming $c_i >0$,\footnote{Possible values  $c_i=0$ do not count for the sake of the argument, since they do not impact the definiteness of the Hessian.} by recalling that the elements of $\bm{P}$ are non-negative (of which at least one is strictly positive for each column), we can write
\begin{align}
\sum_{k,h=1}^J x_k x_h H_{k,h} &= - \sum_{k,h=1}^J x_k x_h \sum_{i=1}^I c_i \frac{p_{ik} p_{ih}}{(\bm{p}_i^\mathsf{T} \bm{u})^2} \nonumber\\
&= - \sum_{i=1}^I  \frac{c_i}{(\bm{p}_i^\mathsf{T} \bm{u})^2} \left( \sum_{k=1}^J  p_{ik} x_k \right)  \left( \sum_{h=1}^J p_{ih} x_h \right) \nonumber\\
&= - \sum_{i=1}^I y_i^2
\end{align}
where
\begin{equation}
    y_i = \frac{\sqrt{c_i}}{\bm{p}_i^\mathsf{T} \bm{u}} \sum_{h=1}^J p_{ih} x_h = \sqrt{c_i} \frac{\bm{p}_i^\mathsf{T}\bm{x}}{\bm{p}_i^\mathsf{T} \bm{u}}. 
\end{equation}
It follows that $\bm{x}^\mathsf{T} \bm{H} \bm{x} \leq 0$, with equality only if $\bm{y}=[y_1 \cdots y_I]^\mathsf{T} = \bm{0}$. We can rewrite $\bm{y}=\bm{D}\bm{P}\bm{x}$, where $\bm{D}=\mathrm{diag}([\frac{\sqrt{c_1}}{\bm{p}_1^\mathsf{T} \bm{u}} \ \cdots \ \frac{\sqrt{c_I}}{\bm{p}_I^\mathsf{T} \bm{u}}])$. Since $\bm{D}$ has full rank $I$, $\bm{D}\bm{P}$ has the same rank of $\bm{P}$; the latter has a rank that cannot exceed $I<J$, hence the rank of $\bm{D}\bm{P}$ is smaller than $J$. We conclude that there exists some $\bm{x}\neq \bm{0}$ such that $\bm{y}=\bm{D}\bm{P}\bm{x} = \bm{0}$, hence $\bm{H}$ is negative semidefinite, proving that all stationary points are equivalent global maxima.

\subsection{Proof of Proposition~\ref{prop_optimality}}\label{App3}

At convergence, the stable points of Eq.  \eqref{eq:shepp} must fulfill the condition $\hat{\bm u}^{m+1}=\hat{\bm u}^{m}$ ($m$ denoting the iteration index) therefore for the solution vector $\hat{\bm u}$  it holds:
\begin{equation}
  \sum_{i=1}^I{  c_i \frac{p_{ij} }{ \sum_{k=1}^J{ p_{ik} \hat{u}_k }} } =     \sum_{i=1}^I c_i \frac{p_{ij}}{\bm{p}_i^\mathsf{T} \hat{\bm{u}}} = 1, \quad  \forall j
\label{eq:sheppstable}
\end{equation}
wherein $\bm{p}_i^\mathsf{T}$ denotes the $i$th row of $\bm{P}$. 

For the MLE-Multinomial formula, from Proposition \ref{prop_convex} we know that all stationary points are equivalent global maxima. To identify the stationary points, we derive  the Lagrangian function, i.e., the derivative with respect to $u_j$ of the objective function from Eq. \eqref{eq:MLfinal_u} augmented by the scaled version of the equality constraint. By exploiting the fact that the $i$th element of  vector $\bm{P}\bm{u}$ is $\bm{p}_i^\mathsf{T} \bm{u}$, 
we obtain, for each $j$:
\begin{equation}
\frac{\partial (\bm c^\mathsf{T}  \log{ \bm P  \bm u} - \lambda (\bm{1}^\mathsf{T}_J\bm{u} -C )) }{\partial u_j} = \sum_{i=1}^I c_i \frac{p_{ij}}{\bm{p}_i^\mathsf{T} \bm{u}} - \lambda =0. \label{eq:optimality2}
\end{equation}
Multiplying by $u_j$ and summing over $j$ we obtain
\begin{equation}
\sum_{i=1}^I c_i \frac{\sum_{j} p_{ij} u_j}{\bm{p}_i^\mathsf{T} \bm{u}} = \lambda \myctot
\end{equation}
and since $\sum_{j} p_{ij} u_j = \bm{p}_i^\mathsf{T} \bm{u}$ it turns out that
$\lambda = 1$. By substituting back in the first-order optimality condition Eq. \eqref{eq:optimality2} it follows that, for each $j$,
\begin{equation}
    \sum_{i=1}^I c_i \frac{p_{ij}}{\bm{p}_i^\mathsf{T} \bm{u}} = 1.
\label{eq:supstable2}
\end{equation}
It is evident that the optimality condition derived for MLE-Multinomial in Eq. \eqref{eq:supstable2}  is identical to the convergence condition  of the EM procedure  derived for MLE-Poisson in Eq. \eqref{eq:sheppstable}.
That means, the two methods share  the same solution space\footnote{Strictly speaking, this holds true as long as the initialization vector in the iterative procedure in Eq. \eqref{eq:shepp} is free from zero elements, as discussed in Section~\ref{sec:mlepois}.}.

\subsection{ Proof of Proposition~\ref{prop3}}\label{App4}

The condition $\bm{P}\bm{u} = \bm{c}$ can be rewritten as  $\bm{p}_i^\mathsf{T} \bm{u} = c_i$, $\forall i$, and that implies that the optimality condition from Eq.
\eqref{eq:supstable2}, or equivalently Eq. \eqref{eq:sheppstable}, is always verified: with the {\em ansatz} $\bm{p}_i^\mathsf{T} \bm{u} = c_i$, $\forall i$,  Eq.  \eqref{eq:supstable}  reduces to the identity $\sum_{i=1}^I  p_{ij} = 1$ that is always satisfied by the column-stochasticity of $\bm{P}$, i.e., $\bm{1}_I^\mathsf{T}\bm{P}=\bm{1}_J$. 
Therefore, we have established that $\mathcal{U}$, if non-empty, represents a set of  solutions for both MLE procedures, i.e., any point in $\mathcal{U}$ is also a solution of MLE-Multinomial and MLE-Poisson. 

\subsection{Proof of Proposition~\ref{prop_MAP}}\label{App5}

Following the stochastic  model shown in Fig. \ref{fig:hierarchicalmultinomial},  $\bm u$  represents a random vector with Multinomial distribution of unknown parameter $\bm \mu$. Taking the ``prior vector'' $\bm \alpha$ as the prior value (before seeing the data) of  $\bm \mu$, we obtain that the ``prior probability distribution'' of $\bm u$, denoted by $\mathbb{P}(\bm{u}; \bm{\alpha})$, is   Multinomial with parameters $\myctot$ and $\bm \alpha$, formally
\begin{equation}
\mathbb{P}(\bm{u}; \bm{\alpha}) = 
\mathcal{M}\left(\myctot, \bm{\alpha} \right) = 
\frac{\myctot!}{u_1! u_2! \cdots u_J!} \prod_{j=1}^{J}{ \left( \bm \alpha_j \right)^{u_j}}
\end{equation}
from where the logarithm 
\begin{equation}
\log{\mathbb{P}(\bm{u}; \bm{\alpha})} =  \bm{u}^\mathsf{T}\log{\bm{\alpha}} -  \sum_{j=1}^J{\log{u_j!}}
\end{equation}
is obtained (up to a constant additive term).

The posterior probability distribution is obtained by the product of likelihood and prior probability distribution $\mathbb{P}(\bm{u}; \bm{\alpha})$.
Accordingly, the logarithm of the posterior distribution equals the sum of the log-likelihood given in Eq. \eqref{eq:loglike0} and $\log \mathbb{P}(\bm{u}; \bm{\alpha})$. Therefore, maximizing the (logarithm of) posterior distribution leads to maximizing the following function of $\bm \mu $ and $\bm u$  
\begin{equation}
\argmax_{ \substack{ \bm{1}^\mathsf{T}_J  \bm \mu  = 1, \;   \bm{1}^\mathsf{T}_J \bm u  = \myctot  \\  \bm \mu  \geq \bm 0, \; \bm u  \geq \bm 0} }{ \big\{ \bm{c}^\mathsf{T} \log{ \bm P  \bm \mu}  + \log \mathbb{P}(\bm{u}; \bm{\alpha}) \big\} }. 
 \label{eq:preMAP}
\end{equation}
Recalling Eq. \eqref{eq:MLML} we bind the estimate $\hat{\bm u}$ to equal the rescaled value of $\hat{\bm \mu}$, i.e., we impose $\hat{\bm u} = \myctot \hat{\bm \mu}$.  With this simple variable substitution, from Eq. \eqref{eq:preMAP} we obtain the following MAP estimator\footnote{In the derivation we have used the following  invariance property:  a  constant factor $\kappa$ in the argument of the logarithm terms does not influence the solution to the maximization. Formally:
$\argmax_{\bm x, \bm y} \{ \log{\frac{\bm x}{\kappa}} + \bm y\} = \argmax_{\bm x, \bm y} \{ \log{\bm x} - \log{\kappa} +  \bm y\} = \argmax_{\bm x, \bm y} \{ \log{\bm x} +  \bm y \}.$}: 
\begin{equation}
\begin{split}
\hat{\bm u}_\text{MAP} = \argmax_{ \substack{ \bm{1}^\mathsf{T}_J  \bm u   = \myctot \\  \bm u  \geq \bm 0} }{ \big\{ \bm{c}^\mathsf{T}  \log{ \bm P  \bm u}  + \log \mathbb{P}(\bm{u}; \bm{\alpha}) \big\} }  
\end{split}
 \label{eq:MAP00}
\end{equation}
which can be rewritten to finally yield the thesis (note that the term $\bm \alpha$ can be replaced by $\bm a = \bm \alpha/\myctot$ for the same invariance property recalled in the previous footnote).

\subsection{Proof of Proposition~\ref{prop5}}\label{App6}

We start by approximating the (prior) Multinomial distribution with a multivariate normal 
\begin{equation}
\mathbb{P}(\bm{u}; \bm{\alpha}) = \mathcal{M}( \myctot, \bm \alpha) \simeq \mathcal{N}(\bm a, \bm \Sigma_{\bm a})
\label{eq:priorapproximationnormal}
\end{equation}
with mean value $\overline{\bm u}=\bm a =   \myctot \bm \alpha$ and covariance matrix $\Sigma_{\bm a}$ composed of the following elements:
\begin{equation}
\sigma_{mn}^2  = \left\{ \begin{array}{rcl}
 \myctot \, a_{m} \left( 1 - a_{m}\right)   & \mbox{for} & m=n  \\
\myctot \, a_{m}  \,  a_{n}     & \mbox{for} & m \neq n  \\
\end{array}\right. .
   \label{eq:momentsMMN}
  \end{equation}
Plugging the normal approximation Eq. \eqref{eq:priorapproximationnormal} into Eq.  \eqref{eq:MAP00} leads to the approximate  MAP estimator  
\begin{equation}
\hat{\bm u}_\text{A.M.} = \argmax_{ \substack{ \bm{1}^\mathsf{T}_J  \bm u  = \myctot \\  \bm u  \geq \bm 0} }{ \big\{ \bm c^\mathsf{T}  \log{ \bm P  \bm u}  -\frac{1}{2} (\bm u - \bm a)^\mathsf{T} \bm{\Sigma}_{\bm a}^{-1}(\bm u - \bm a) \big\} }  .
 \label{eq:MAP2normal}
\end{equation}

In our application we are considering a very large number of tiles ($J\gg 1$) which implies that the individual probabilities are very small in absolute terms ($\alpha_j \ll 1$). This justifies the further simplification of the covariance matrix:  neglecting the product terms in the off-diagonal elements, $\myctot \alpha_{m}   \alpha_{n} \simeq 0$,  and approximating the diagonal elements  with $ \myctot  \alpha_{m} \left( 1 - \alpha_{m}\right) \simeq \myctot  \alpha_{m} = a_{m}$, the covariance matrix reduces to the diagonal matrix 
$ \bm A \eqdef \mathrm{diag}([a_1 \ \cdots \ a_J ])$
and its inverse to $\bm{\Sigma}_{\bm a}^{-1} \simeq \bm A^{-1} = \mathrm{diag}([\frac{1}{a_1} \ \ldots \ \frac{1}{a_J} ])$. In this way the approximate MAP estimator rewrites 
\begin{equation}
\hat{\bm u}_\text{A.M.} \approx \argmax_{ \substack{ \bm{1}^\mathsf{T}_J  \bm u  = \myctot \\  \bm u  \geq \bm 0} }{ \big\{ \bm c^\mathsf{T}  \log{ \bm P  \bm u}  -\frac{1}{2} (\bm u - \bm a)^\mathsf{T} \bm A^{-1}(\bm u - \bm a) \big\} }  
 \label{eq:MAP2normal2b}
\end{equation}
which finally yields the thesis.

\subsection{Proof of Proposition~\ref{prop6}}\label{App7}

The constrained minimization problem defined by Eq. \eqref{eq:mymle} can be solved by considering the Karush-Kuhn-Tucker conditions:
\begin{equation}
\left\{
\begin{array}{l}
    \nabla_{\bm{u}} \left[ \frac{1}{2} (\bm{u} \!-\! \bm{a})^\mathsf{T} \bm{A}^{-1}(\bm{u} \!-\! \bm{a}) + \bm{\lambda}^\mathsf{T}(\bm{P} \bm{u} - \bm{c}) - \bm{\nu}^\mathsf{T}\bm{u} \right]= \bm{0}  \\
     \bm P \bm u = \bm c\\
     \nu_j \geq 0, \quad \nu_ju_j  = 0, \quad u_j \geq 0, \quad j=1,\ldots,J
\end{array}\right. .
\end{equation}
The calculation of the gradient in the first line yields 
$$
\bm{A}^{-1}(\bm{u} \!-\! \bm{a}) +  \bm{P}^\mathsf{T}\bm{\lambda} - \bm{\nu} = \bm{0}
$$
from which 
\begin{equation}
\bm{u} = \bm{A}\bm{\nu} -  \bm{A} \bm{P}^\mathsf{T}\bm{\lambda} + \bm{a} \label{eq::u}.
\end{equation}
The vector of the Lagrange multipliers $\bm{\lambda}$ corresponding to the equality constraint can be explicitly obtained by exploiting the constraint itself, that is
\begin{equation}
\bm P \bm u = \bm c \Leftrightarrow \bm{P}\bm{A}\bm{\nu} -  \bm{P}\bm{A} \bm{P}^\mathsf{T}\bm{\lambda} + \bm{P}\bm{a} = \bm{c}
\end{equation}
which leads to 
\begin{equation}
\bm{\lambda}=(\bm{P}\bm{A}\bm{P}^\mathsf{T})^{-1} (\bm{P}\bm{A}\bm{\nu}+ \bm{P}\bm{a}-\bm{c}).
\end{equation}
Substituting into Eq. \eqref{eq::u} returns
\begin{align}
\hat{\bm{u}} &= \bm{A}\bm{\nu} - \bm{A} \bm{P}^\mathsf{T} (\bm{P}\bm{A}\bm{P}^\mathsf{T})^{-1} (\bm{P}\bm{A}\bm{\nu}+ \bm{P}\bm{a}-\bm{c})  + \bm{a}. \label{eq:hat_u}
\end{align}

Due to the  constraints $u_j \geq 0$, $\nu_j \geq 0$ and $\nu_j u_j =0$, where $\nu_j>0$ then it must necessarily hold that $\hat{u}_j=0$; conversely, where $\nu_j=0$ the solution is given by Eq. \eqref{eq:hat_u}, which however depends on the overall vector $\bm{\nu}$.
Notice though that the vector $\bm{A}\bm{\nu}$ has zero components for those indices $j$ where $\nu_j=0$, since $\bm{A}$ is a diagonal matrix. The same is not generally true for the vector $\bm{P}\bm{A}\bm{\nu}$, since the further left-multiplication by $\bm{P}$ may recombine the zero elements with the non-zero ones. However, being $\bm{P}$ a very sparse matrix, one can consider a relaxation in which $\nu_j=0$ always produces a corresponding zero entry in $\bm{P}\bm{A}\bm{\nu}$.
This directly yields the vector $\check{\bm u}$ given in Eq. \eqref{eq:check_u}, which however may violate the  constraint on the total mass, i.e., $\bm{1}_J^\mathsf{T}\bm{u}=C$.
Nonetheless, a neighboring solution $\hat{\bm{u}}_\text{DF}$  
can be easily computed by applying to $\check{\bm u}$ the transformation Eq. \eqref{eq:shepp}, that has the effect of  redistributing the mass between non-zero elements of the input vector so as to guarantee the fulfilment of the total mass constraint. 
The outcome of this procedure represents an \emph{ad hoc} estimator that we label DF for ``Data First" and can be written explicitly as:
 \begin{equation}
    \hat{u}_{j,\text{DF}} =  \check{u}_j \, \sum_{i=1}^I  c_i  \frac{ p_{ij}  }{\sum_{k=1}^J  p_{ik}  \, \check{u}_k }  \Longrightarrow \hat{\bm{u}}_\text{DF} = \check{\bm{Q}} \bm{c} 
\end{equation}
with $\check{\bm{Q}}$ given in Eq. \eqref{eq:DF}.

\end{document}